\documentclass[final,3p,times]{elsarticle}

\usepackage{amssymb,amsmath,amsthm,bm,mathtools}
\usepackage{booktabs,threeparttable,multirow,enumitem,xcolor}
\usepackage[colorlinks=true, linkcolor=blue]{hyperref}

\newtheorem{theorem}{Theorem}[section]
\theoremstyle{remark}
\newtheorem{remark}{Remark}[section]

\newcommand\tbbint{{-\mkern -16mu\int}}
\newcommand\dbbint{{-\mkern -19mu\int}}
\newcommand\bbint{\mathop{\mathchoice{\dbbint}{\tbbint}{\tbbint}{\tbbint}\mkern -6mu}\nolimits}

\makeatletter
\newcounter{subfig}[figure]

\newcommand{\makesubfig}[1]{\refstepcounter{subfig}\label{#1}}
\newcommand{\figsubref}[2]{\hyperref[#2]{\ref*{#1}\ref*{#2}}}
\makeatother

\journal{}
\affiliation[label1]{organization={School of Mathematical Sciences, Peking University},
            city={Beijing},
            postcode={100871},
            country={China}}
\affiliation[label2]{organization={School of Mathematical Sciences, Laboratory of Mathematics and Complex Systems, Ministry of Education, Beijing Normal University},
            city={Beijing},
            postcode={100875}, 
            country={China}}
\affiliation[label3]{organization={School of Mathematical Sciences, Beijing International Center for Mathematical Research, Center for Quantitative Biology, Center for Machine Learning Research, Peking University},
            city={Beijing},
            postcode={100871},
            country={China}}
\fntext[fn1]{This work was supported by the National Natural Science Foundation of China (No. 12225102, T2321001, 12288101) and National Key Research and Development Program of China (No. 2024YFA0919500). }
\cortext[cor1]{Corresponding author}

\begin{document}

\begin{frontmatter}

\title{A Geometry-Adaptive Deep Variational Framework for Phase Discovery in the Landau--Brazovskii Model}

\author[label1]{Yuchen Xie}
\author[label2]{Jianyuan Yin\corref{cor1}}
\ead{jyyin@bnu.edu.cn}
\author[label1,label3]{Lei Zhang\corref{cor1}\fnref{fn1}}
\ead{zhangl@math.pku.edu.cn}

\begin{abstract}
The discovery of ordered structures in pattern-forming systems, such as the Landau--Brazovskii (LB) model, is often limited by the sensitivity of numerical solvers to the prescribed computational domain size. 
Incompatible domains induce artificial stress, frequently trapping the system in high-energy metastable configurations.
To resolve this issue, we propose a Geometry-Adaptive Deep Variational Framework (GeoDVF) that jointly optimizes the infinite-dimensional order parameter, which is parameterized by a neural network, and the finite-dimensional geometric parameters of the computational domain.
By explicitly treating the domain size as trainable variables within the variational formulation, GeoDVF naturally eliminates artificial stress during training.
To escape the attraction basin of the disordered phase under small initializations, we introduce a warmup penalty mechanism, which effectively destabilizes the disordered phase, enabling the spontaneous nucleation of complex three-dimensional ordered phases from random initializations. 
Furthermore, we design a guided initialization protocol to resolve topologically intricate phases associated with narrow basins of attraction.
Extensive numerical experiments show that GeoDVF provides a robust and geometry-consistent variational solver capable of identifying both stable and metastable states without prior knowledge.
\end{abstract}

\begin{keyword}
Landau--Brazovskii model \sep Geometry-adaptive \sep Deep variational framework \sep Warmup penalty mechanism \sep Domain size optimization
\end{keyword}

\end{frontmatter}

\section{Introduction}
\label{sec:intro}
The Landau--Brazovskii (LB) model is a prototypical continuum framework for describing phase transitions driven by short-wavelength instabilities in systems that undergo ordering at a finite wavenumber. 
Originally proposed by Brazovskii to describe the transition from an isotropic liquid to a nonuniform state \cite{brazovskii1975phase}, this model has become a standard theoretical framework for investigating ordered patterns in diverse physical systems. 
Its applications span multiple systems, ranging from microphase separation in block copolymers \cite{shi1996theory,zhang2008efficient} and liquid-crystal ordering \cite{kats1993weak,wang2021modelling} to biological pattern formation such as viral capsids, and nuclear pasta structures in neutron stars \cite{andelman2009modulated,caplan2017colloquium}.

The LB framework is closely related to the Swift--Hohenberg equation \cite{swift1977hydrodynamic} and to Landau theory near a Lifshitz point \cite{goldenfeld1992lectures}. 
It provides a standard variational setting for analyzing periodic minima, energy-scaling laws, and pattern-selection mechanisms \cite{peletier2001spatial}. 
The inclusion of a cubic interaction term in the free-energy functional breaks the inversion symmetry of the order parameter, thereby enabling the model to capture asymmetric ordered phases and describe first-order phase transitions \cite{brazovskii1975phase,fredrickson1987fluctuation,zhang2014dynamic}. 
By characterizing the competition between local nonlinear interactions and long-range ordering effects, the LB framework predicts a rich variety of three-dimensional (3D) periodic structures. 
Beyond classical lamellar (LAM), hexagonally-packed cylinder (HEX), and body-centered cubic (BCC) phases \cite{podneks1996landau}, the model also admits topologically complex phases such as face-centered cubic (FCC), double gyroid (DG), and Frank--Kasper phases including A15 and $\sigma$ \cite{shi1999nature,mcclenagan2019landau}.
The numerous local minima of this energy landscape pose significant challenges for comprehensively exploring the configuration space and resolving these intricate spatial structures, thereby making it a rigorous testbed for advanced numerical methods.

Because of the strong nonlinearity of the LB energy functional and the presence of multiple local minima, the development of efficient numerical algorithms for computing stationary states remains a central challenge.
Existing numerical approaches can be broadly classified into three categories.
The first category considers directly solving the Euler--Lagrange (E-L) equations associated with the free-energy functional.
The second category comprises gradient flow approaches that describe the dynamical relaxation of ordered structures toward equilibrium states.
To enhance numerical stability and allow for large time steps, various temporal discretization strategies have been developed, including semi-implicit methods \cite{aland2012buckling}, exponential time differencing schemes \cite{du2021maximum}, and operator-splitting techniques \cite{lee2015first}.
To achieve nonlinear energy stability while retaining linear solvability at each time step, numerical techniques such as convex splitting schemes \cite{vignal2015energy,zhang2023spectral} and the scalar auxiliary variable (SAV) method \cite{shen2018scalar} have been systematically developed and widely applied in phase-field models.
Complementary to temporal discretization, efficient spatial schemes are equally essential. 
Spectral and pseudospectral methods \cite{chen1998applications} are particularly effective for periodic systems such as diblock copolymers, while finite difference \cite{hu2009stable} and finite element methods \cite{du2008adaptive} are also widely employed.
In contrast to dynamical evolution approaches, the third category formulates the problem as a direct optimization task, seeking stationary states through minimization of the discretized free-energy functional.
Recent studies indicate that optimization-based algorithms can exhibit superior computational efficiency compared with traditional gradient flow methods when computing stationary states in phase-field models \cite{jiang2020efficient}.
Representative examples include the Broyden--Fletcher--Goldfarb--Shanno (BFGS) algorithm applied to spherical harmonics discretizations \cite{luo2018phase}, as well as the adaptive accelerated Bregman proximal gradient (AA-BPG) method \cite{jiang2020efficient,bao2022adaptive}, which accelerates convergence in phase-field crystal models through optimization-based acceleration strategies.
The numerous local minima in this energy landscape pose significant challenges for systematically exploring the configuration space and resolving intricate spatial structures, thereby rendering the LB model a stringent testbed for advanced numerical optimization methods.

Despite the efficiency of these algorithms, a fundamental physical challenge comes from the spatial discretization. 
To simulate infinite-space LB theory, traditional numerical methods mainly rely on the Crystalline Approximant Method (CAM) \cite{zhang2008efficient,jiang2025approximation}, which restricts the system to a finite computational domain $\Omega$ with periodic boundary conditions, treating the structure as a periodic crystal. 
However, this creates a critical constraint: the ordered phases in the LB model have a natural characteristic length scale. 
If the size and shape of the computational domain do not strictly match this natural periodicity, the system is forced into a stressed state. 
Consequently, to remove this artificial stress and find the true equilibrium, the domain geometry cannot be treated as a fixed background; instead, it must be optimized together with the order parameter. 
To solve this coupled problem, traditional strategies typically use an alternating iterative scheme: optimizing the density field while fixing the domain $\Omega$, and then adjusting $\Omega$ based on the order parameter \cite{zhang2008efficient,jiang2013discovery}. 
This separated process makes the solver highly sensitive to the initial configuration. 
Without precise prior knowledge to construct a good initial guess, these solvers may easily get trapped in metastable states, hindering the discovery of complex or unknown ordered structures.

Physics-informed deep learning techniques, including Physics-Informed Neural Networks (PINNs) \cite{raissi2019physics} and the Deep Ritz Method (DRM) \cite{yu2018deep}, have recently emerged as effective frameworks for approximating solutions to partial differential equations and variational problems. 
However, classical PDE-based approaches like PINNs are often limited when discovering ordered structures in pattern-forming systems, because PINNs minimize the residual of the E-L equation, which only guarantees convergence to a stationary point. 
Since local minima, local maxima, and saddle points all satisfy the E-L equation, a PINN-based approach lacks the intrinsic driving force to distinguish true energy minimizers from high-energy saddle points.
Among these approaches, variational formulations such as the DRM are particularly well suited for energy minimization problems. 
By parameterizing the order parameter with a continuous neural network, these methods reduce the infinite-dimensional functional minimization problem to a finite-dimensional optimization problem over network parameters.
More importantly, this neural representation permits the domain geometry $\Omega$ to be incorporated as trainable variables instead of being prescribed before optimization.
However, directly applying the standard DRM to the LB model encounters a fundamental difficulty arising from the optimization dynamics of deep neural networks. 
Neural networks are typically initialized with small random weights to facilitate stable training \cite{glorot2010understanding,he2015delving}, while in the LB free-energy landscape, a near-zero order parameter corresponds to the disordered phase (a trivial solution), which can act as a strong local attractor.
Consequently, when initialized with small weights, the network is prone to converge to the disordered phase during the early stages of training, thereby failing to explore nontrivial ordered structures. 
Furthermore, when the disordered phase is unstable, the standard DRM often fails to explore the energy landscape sufficiently and identify multiple ordered phases. 
Without additional regularization, the optimization often becomes biased toward a certain metastable configuration, leaving other physically relevant solutions undiscovered.

To address these challenges, we propose a Geometry-Adaptive Deep Variational Framework (GeoDVF) to discover complex ordered structures.
The central component of this framework is a neural network representation of the order parameter field, and we treat the geometric parameters of the computational domain as learnable variables and embed them, together with the network parameters, into a unified computational graph.
This architecture enables the simultaneous variational optimization of both the physical field and the domain geometry through gradient-based training. 
To mitigate the strong attraction of the disordered phase induced by small-weight initialization and enlarge the range of attainable metastable states, we introduce a warmup penalty strategy.
By incorporating an additional penalty term into the loss function, the disordered phase is destabilized during the early stages of training, thereby driving the optimization trajectory toward ordered structures.
Numerical experiments demonstrate that GeoDVF can discover complex 3D phases, including BCC, FCC, and A15, starting from random initializations.
Furthermore, our method successfully identifies both stable and metastable states without requiring any prior structural information.

The remainder of this paper is organized as follows. 
Section~\ref{sec:lbmodel} introduces the LB model and reviews traditional numerical methods, highlighting how the alternating iterative strategies lead to strong dependence on initializations.
Section~\ref{sec:method} presents the proposed framework, including the network architecture and the training strategy, and provides a theoretical analysis showing that the warmup penalty destabilizes the disordered phase.
Section~\ref{sec:experiments} presents extensive numerical studies that assess the accuracy, robustness, and discovery capability of the proposed framework.
Section~\ref{sec:final} concludes the paper and discusses perspectives for further development.

\section{Landau--Brazovskii model}\label{sec:lbmodel}
We focus on the nondimensionalized form of the LB free energy, which describes the phase behavior of systems undergoing short-wavelength ordering. 
The system is characterized by a scalar order parameter $\phi(\bm{x})$, representing the local density variation. 
By rescaling the length and energy scales to eliminate redundant parameters, the effective free-energy functional is given by \cite{shi1999nature}
\begin{equation}\label{eq:lb_functional}
\mathcal{E}(\phi)=\int\left(\frac{1}{2}|(\Delta+1)\phi|^2+\frac{\tau}{2}\phi^2-\frac{\gamma}{6}\phi^3+\frac{1}{24}\phi^4\right)\,\mathrm{d}\bm{x},
\end{equation}
subject to the mass-conservation constraint
\begin{equation}
\label{eq:mass_constraint}
\int\phi\,\mathrm{d}\bm{x}=0.
\end{equation}
Here, $\tau$ represents the reduced temperature, and $\gamma$ characterizes the asymmetry of the ordered phases, which is essential for describing first-order phase transitions.

Formally, the scalar order parameter $\phi$ is defined on the entire space $\mathbb{R}^d$ ($d=2,3$). 
To enable numerical implementations, we utilize the CAM \cite{zhang2008efficient, jiang2025approximation} and approximate different phases with periodic crystal structures on finite domains. 
Specifically, we consider the order parameter $\phi$ on a box domain $\Omega$ with periodic boundary conditions to characterize each crystalline phase. 
Consequently, our analysis focuses on the free-energy density $f(\phi)$, defined as the energy per unit volume
\begin{equation}\label{eq:energy_density}
f(\phi)=\frac{1}{|\Omega|}\int_{\Omega}\left(\frac{1}{2}|(\Delta+1)\phi|^2+\frac{\tau}{2}\phi^2-\frac{\gamma}{6}\phi^3+\frac{1}{24}\phi^4\right)\,\mathrm{d}\bm{x},
\end{equation}
where $|\Omega|$ denotes the volume of the domain. 
Thus, the problem of finding local minima is reduced to minimizing this free-energy density $f$ on a finite periodic domain $\Omega$.

We introduce some notations to facilitate the subsequent analysis. 
The mass conservation \eqref{eq:mass_constraint} implies that the order parameter $\phi$ should be constrained to the manifold $\mathcal{M}$,
\begin{equation}\label{eq:feasible_manifold}
\phi\in\mathcal{M}=
\left\{\psi\in H_{\text{per}}^2(\Omega)\,\middle|\,\int_{\Omega}\psi\,\mathrm{d}\bm{x}=0\right\},
\end{equation}
where $H_{\text{per}}^2(\Omega)$ denotes the Sobolev space of periodic functions with square-integrable second derivatives. 
To enforce this constraint, we introduce the projection operator $\mathcal{P}$, which maps a function onto the zero-mean subspace
\begin{equation}\label{eq:projection}
\mathcal{P}\psi=\psi-\frac{1}{|\Omega|}\int_{\Omega}\psi\,\mathrm{d}\bm{x}.
\end{equation}
Our objective is to find the equilibrium configuration $\phi^*$ that minimizes the LB free-energy density on this manifold
\begin{equation*}
\phi^*=\mathop{\arg\min}_{\phi\in\mathcal{M}}\,f(\phi).
\end{equation*}

To explicitly incorporate the domain geometry into the variational framework, we map the physical domain to a fixed reference domain. 
The computational domain $\Omega$ is assumed to be an orthogonal box in $\mathbb{R}^d$ defined by $\Omega = \prod_{i=1}^d [0, L_i]$. 
We introduce the normalized coordinates $\tilde{x}_i = x_i / L_i$, which map $\Omega$ to the unit hypercube $D=[0,1]^d$. 
Accordingly, the order parameter on the fixed unit domain is $\tilde{\phi}(\tilde{\bm{x}}) = \phi(\bm{x})$. 
In the following discussions, we drop the tilde for brevity. 
It is important to note that the Jacobian determinant of this transformation, $J = \prod L_i = |\Omega|$, exactly cancels the volume normalization factor $1/|\Omega|$ in the free-energy density \eqref{eq:energy_density}. 
Consequently, the rescaled free-energy density is an integral over the fixed unit domain, with the geometric dependence transferred explicitly into the differential operator,
\begin{equation}\label{eq:rescaled_energy}
f(\phi, \bm{\omega}) = \int_{D} 
\left( \frac{1}{2} |(\Delta_{\bm{\omega}} + 1)\phi|^2 + \frac{\tau}{2}\phi^2 - \frac{\gamma}{6}\phi^3 + \frac{1}{24}\phi^4 \right) 
\,\mathrm{d}\bm{x},
\end{equation}
where $\bm{\omega} = (L_1, \dots, L_d)$ represents the geometric parameters. 
The geometry-adapted Laplacian $\Delta_{\bm{\omega}}$ is defined as
\begin{equation*}
\Delta_{\bm{\omega}} = \sum_{i=1}^d \frac{1}{L_i^2} \cdot \frac{\partial^2}{\partial x_i^2}.
\end{equation*}
Specifically, for a two-dimensional (2D) domain, $\bm{\omega}=(L_1, L_2)$ and $\Delta_{\bm{\omega}} = L_1^{-2}\partial_{xx} + L_2^{-2}\partial_{yy}$; 
for a 3D domain, $\bm{\omega}=(L_1, L_2, L_3)$ and $\Delta_{\bm{\omega}} = L_1^{-2}\partial_{xx} + L_2^{-2}\partial_{yy} + L_3^{-2}\partial_{zz}$. 
This reformulation reveals that the domain parameters $\bm{\omega}$ act as coefficients within the differential operator. 
Therefore, finding the equilibrium state corresponds to minimizing $f(\phi, \bm{\omega})$ simultaneously with respect to the order parameter $\phi$ (on the fixed domain $D$) and the geometry parameters $\bm{\omega}$.

For each ordered structure, the free-energy density $f$ explicitly depends on the domain geometry $\bm{\omega}$. 
Consequently, the computational domain must be adjusted adaptively to relax the stress and minimize the free-energy density. 
The optimization process typically follows an alternating iterative scheme \cite{jiang2013discovery}. 
\begin{enumerate}[label=\textbf{Step \arabic*}, leftmargin=*]
\item With fixed domain parameters $\bm{\omega}^k$, evolve the order parameter $\phi$ to a local minimum:
\begin{equation}\label{eq:fix_domain}
\phi^{k+1}=\mathop{\arg\min}_{\phi\in\mathcal{M}}\,f(\phi,\bm{\omega}^k).
\end{equation}
\item With the fixed profile $\phi^{k+1}$, optimize the domain parameters $\bm{\omega}$ to minimize the free-energy density
\begin{equation}\label{eq:fix_phi}
\bm{\omega}^{k+1}=\mathop{\arg\min}_{\bm{\omega}}\,f(\phi^{k+1},\bm{\omega}).
\end{equation}
\item Repeat Step 1 and Step 2 until the error falls below a prescribed tolerance.
\end{enumerate}

To efficiently solve the first subproblem \eqref{eq:fix_domain}, one can employ the Allen-Cahn gradient flow with mass conservation 
\begin{equation*}
\frac{\partial \phi}{\partial t} = -\mathcal{P} \frac{\delta f}{\delta \phi} = - \mathcal{L}_{\bm{\omega}^k}\phi - \mathcal{P}g(\phi),
\end{equation*}
where the linear operator $\mathcal{L}_{\bm{\omega}}$ and a nonlinear term $g(\phi)$ are defined as 
\begin{equation*}
\mathcal{L}_{\bm{\omega}} = (\Delta_{\bm{\omega}}+1)^2 + \tau,\quad g(\phi)=-\frac{\gamma}{2}\phi^2+\frac{1}{6}\phi^3.
\end{equation*}
For temporal discretization, a semi-implicit scheme can be adopted to ensure stability with large time step sizes \cite{chen1998applications}
\begin{equation}\label{eq:semi_implicit}
\frac{\phi^{n+1}-\phi^n}{\Delta t} = - \mathcal{L}_{\bm{\omega}^k}\phi^{n+1} - \mathcal{P}g(\phi^n).
\end{equation}
Let $\mathcal{F}$ and $\mathcal{F}^{-1}$ denote the discrete Fourier transform and its inverse, respectively, and denote the Fourier coefficients as $\hat{\phi}(\bm{k}) = \mathcal{F}(\phi)(\bm{k})$. 
In the Fourier space, the differential operator $\mathcal{L}_{\bm{\omega}^k}$ is diagonalized, so applying $\mathcal{F}$ to \eqref{eq:semi_implicit} allows for a fast pointwise update based on pseudospectral methods:
\begin{equation*}
\hat{\phi}^{n+1}(\bm{k}) = \frac{\hat{\phi}^n(\bm{k}) - \Delta t \cdot \mathcal{F}(\mathcal{P}g(\phi^n))(\bm{k})}{1 + \Delta t \left[ \left(1 - |\bm{k}|_{\bm{\omega}^k}^2\right)^2 + \tau \right]}\, ,
\end{equation*}
where the geometry-dependent wave vector magnitude is $|\bm{k}|_{\bm{\omega}}^2 = \sum_i (2\pi k_i / L_i)^2$.
The solution to \eqref{eq:fix_domain} should satisfy the E-L equation $\mathcal{L}_{\bm{\omega}^k}\phi^{n+1} + \mathcal{P}g(\phi^{n+1}) = 0$.

For the second subproblem \eqref{eq:fix_phi}, the domain parameters $\bm{\omega}=(L_1, \dots, L_d)$ are typically updated via gradient descent. 
While the minimization of $f$ with respect to $\omega$ theoretically admits an analytic solution for a fixed $\phi$, directly imposing this exact geometric solution at each step may cause numerical instability. 
The exact solution derived from a transient non-equilibrium $\phi$, particularly during early iterations, can force the domain to change rapidly, leading to oscillation or failure. 
Therefore, gradient-based updates act as a  relaxation mechanism to ensure that the domain geometry evolves smoothly together with the order parameter $\phi$. 
It is important to note that the bulk energy terms are invariant under domain scaling when $\phi$ is fixed on the reference domain $D$, so the gradient arises solely from the elastic energy term. 
By Parseval's identity, the gradient with respect to the $i$-th domain length $L_i$ ($i=1,\dots,d$) is derived analytically as
\begin{equation*}
\frac{\partial f}{\partial L_i} = \frac{2}{L_i} \sum_{\bm{k}} \left( 1 - |\bm{k}|_{\bm{\omega}}^2 \right) \left( \frac{2\pi k_i}{L_i} \right)^2 |\hat{\phi}(\bm{k})|^2,
\end{equation*}
where $k_i$ denotes the $i$-th component of the wave vector $\bm{k}$.

This alternating strategy highlights the fundamental dependency on initialization: the solver requires not only an initial configuration $\phi^0$ when solving \eqref{eq:fix_domain}, but also a compatible initial domain size $\bm{\omega}^0$. 
We denote this coupled deterministic solver process as
\begin{equation}\label{eq:solver}
(\phi^*,\bm{\omega}^*)=\text{Solver}(\phi^0,\bm{\omega}^0).
\end{equation}
To explicitly demonstrate the sensitivity of the alternating iterative scheme to initial guesses $(\phi^0, \bm{\omega}^0)$, we conduct a series of numerical experiments on the 2D LB model.
Apart from the disordered phase, which is the trivial solution $\phi=0$, the 2D LB model admits two primary periodic ordered structures: LAM and HEX, as visualized in Figure~\ref{fig:LB2d}. 
Theoretically, both the LAM and HEX phases can share the same optimal computational domain $\Omega_{\text{opt}}=[0, 8\pi] \times [0, 16\pi / \sqrt{3}]$, i.e., $\bm{\omega}_{\text{opt}}=(8\pi, 16\pi / \sqrt{3})$. 
We utilize the standard initialization to construct the initial profiles for these phases \cite{wickham2003nucleation}, 
\begin{equation*}
\phi^0(\bm{x}) = a_1 \cos(\bm{k}_1 \cdot \bm{x}) + a_2 (\cos(\bm{k}_2 \cdot \bm{x}) + \cos(\bm{k}_3 \cdot \bm{x})),
\end{equation*}
where the wave vectors are $\bm{k}_1=(1,0)$, $\bm{k}_2=(-1/2, \sqrt{3}/2)$, and $\bm{k}_3=(-1/2, -\sqrt{3}/2)$. 
We set $(a_1, a_2)=(1, 0)$ for LAM ($\phi^0_{\text{LAM}}$) and $(a_1, a_2)=(1, 1)$ for HEX ($\phi^0_{\text{HEX}}$).

\begin{figure}[!t]
\centering
\makesubfig{fig:LB2d_a}
\makesubfig{fig:LB2d_b}
\makesubfig{fig:LB2d_c}
\makesubfig{fig:LB2d_d}
\includegraphics[width=\linewidth]{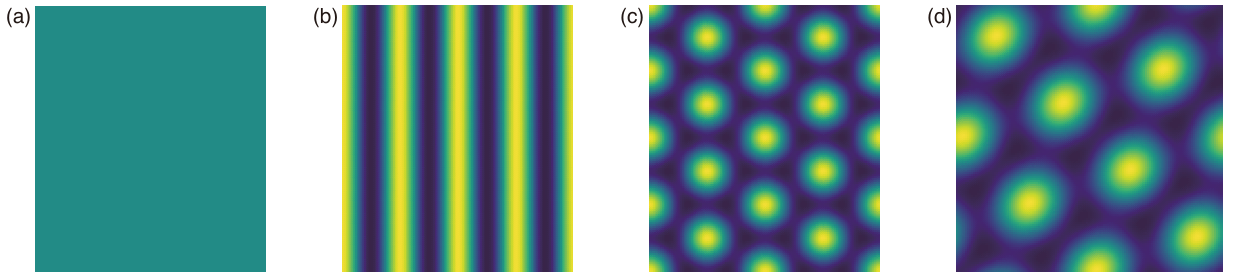}
\caption{Representative stationary states of the 2D LB model. 
(a) The disordered phase. 
(b) LAM phase. 
(c) HEX phase obtained within an optimal computational domain. 
(d) A distorted hexagonal phase (denoted as HEX*). This is a high-energy metastable state where the lattice structure is deformed due to the artificial stress caused by domain mismatch.}
\label{fig:LB2d}
\end{figure}

We consider two parameter sets: $(\tau, \gamma)=(0.001, 0.6)$, where HEX is the global minimum and the disordered phase is metastable; and $(\tau, \gamma)=(-0.4, 0.3)$, where LAM is the global minimum. 
To test the robustness of the solver defined in \eqref{eq:solver}, we introduce a non-optimal square domain $\Omega_{\text{s}}=[0, 20]^2$ (i.e., $\bm{\omega}_{\text{s}}=(20, 20)$) and a random initialization $\phi^0_{\text{randn}}(\bm{x}) \sim \mathcal{N}(0, 1)$ sampled independently at each grid point. 
We evaluate the solver's performance using various combinations of initial fields and domain geometries, and the convergence results are summarized in Table~\ref{tab:sensitivity}. 
The experiments reveal a significant dependency on the initialization. 
Even when initialized with the correct topological structure ($\phi^0_{\text{HEX}}$), starting with a non-optimal domain $\bm{\omega}_{\text{s}}$ can prevent the alternating solver from reaching the stress-free ground state. 
Instead, it gets trapped in a distorted hexagonal phase (denoted as HEX*, see Figure~\figsubref{fig:LB2d}{fig:LB2d_d}), which is a metastable state with higher energy caused by the domain mismatch stress. 
When starting from random noise $\phi^0_{\text{randn}}$, the solver often fails to identify the global minimum if the domain size is not pre-tuned. 
For instance, in the HEX-stable regime with $\bm{\omega}_{\text{s}}$, the system collapses into the disordered phase rather than evolving into ordered structures. 
These results highlight that without precise prior knowledge of both the pattern periodicity and the domain size, the traditional alternating iterative strategy may struggle to identify the global energy minimum.

\begin{table}[!t]
\centering
\begin{threeparttable}
\caption{Convergence outcomes for the alternating iterative solver under different parameter sets $(\tau, \gamma)$ and initial guesses.}
\label{tab:sensitivity}
\begin{tabular}{ccccccc}
\toprule
\multirow{2}{*}{$(\tau,\gamma)$} & \multicolumn{6}{c}{Initial guess $(\phi^0,\omega^0)$} \\ \cmidrule(lr){2-7}
 & $(\phi^0_{\text{HEX}},\omega_{\text{opt}})$ & $(\phi^0_{\text{HEX}},\omega_{\text{s}})$ & $(\phi^0_{\text{LAM}},\omega_{\text{opt}})$ & $(\phi^0_{\text{LAM}},\omega_{\text{s}})$ & $(\phi^0_{\text{randn}},\omega_{\text{opt}})$ & $(\phi^0_{\text{randn}},\omega_{\text{s}})$ \\
\midrule
(0.001, 0.6) & HEX & HEX* & disorder & disorder & HEX & disorder \\
(-0.4, 0.3) & HEX & LAM & LAM & LAM & LAM & LAM \\
\bottomrule
\end{tabular}
\end{threeparttable}
\end{table}

\section{Geometry-Adaptive Deep Variational Framework}\label{sec:method} 
To mitigate the heavy dependence on the initial guess inherent in the alternating iteration method, we propose GeoDVF to optimize the order parameter $\phi$ and the geometry parameter $\bm{\omega}$ simultaneously. 
An overview of our proposed method is presented in Figure~\ref{fig:architecture}. 

Central to this framework, we employ a neural network to parameterize the order parameter $\phi$, denoted as $\phi_{\theta}$, while we treat the domain geometry as learnable parameters jointly optimized with the network weights $\theta$.
Specifically, for a $d$-dimensional box domain with side lengths $\bm{\omega}=(L_1, \dots, L_d)$, we introduce the inverse geometric parameters $\bm{\beta} = (\beta_1, \dots, \beta_d)$, defined element-wise as $\beta_i = 1/L_i$. 
This reparameterization offers two key advantages: 
first, it naturally maps the domain sizes $L_i$ (typically $>1$) to a bounded range $\beta_i \in (0, 1)$, which stabilize training; 
second, it simplifies the backpropagation process, as the free-energy density depends on the geometric parameters solely through inverse square terms (i.e., $1/L_i^2 = \beta_i^2$), transforming rational functions into polynomial terms with respect to the trainable variables. 
Consequently, the complete set of trainable parameters for the framework is defined as $\Theta = (\theta, \bm{\beta})$. 

\begin{figure}[!t]
\centering
\includegraphics[width=\linewidth]{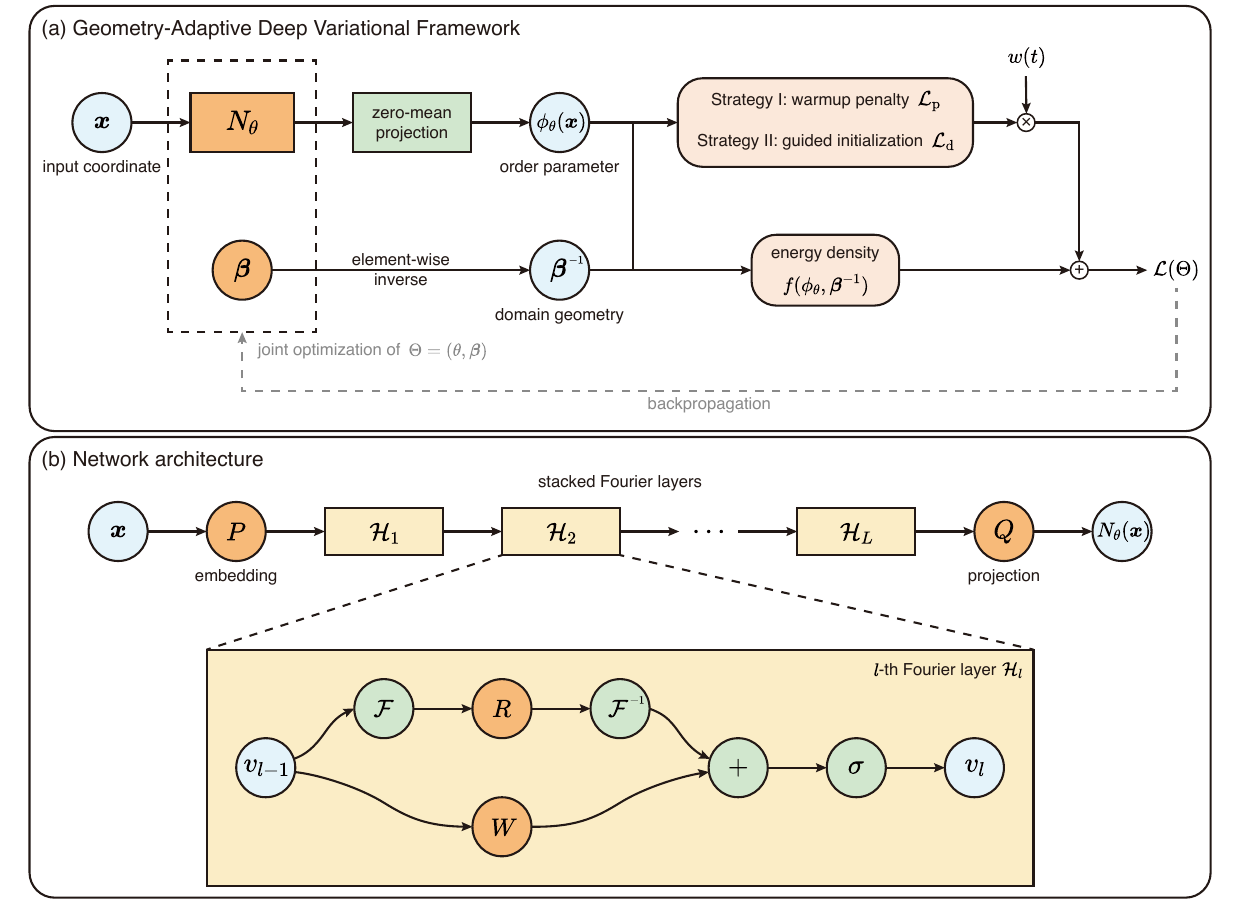}
\caption{An overview of GeoDVF. 
(a) The workflow of GeoDVF. The system takes spatial coordinates $\bm{x}$ as input and simultaneously optimizes the network parameters $\theta$ (for the order parameter field $\phi_{\theta}$) and the inverse geometric parameters $\bm{\beta}$ (for the domain geometry). The zero-mean projection enforces the mass conservation constraint, converting the raw network output $N_{\theta}(\bm{x})$ into the physical order parameter $\phi_{\theta}$. The total loss $\mathcal{L}(\Theta)$ integrates the physical free-energy density $f$ with auxiliary terms from the warmup penalty or guided initialization. The dashed line indicates that the unified parameter set $\Theta = (\theta, \bm{\beta})$ is updated jointly via backpropagation.
(b) Detailed architecture of the neural network $N_{\theta}$. The coordinate embedding layer $P$ maps the input coordinates $\bm{x}$ directly into a high-dimensional feature function. This representation evolves through a stack of Fourier layers $\mathcal{H}_l$ and is mapped back to the unconstrained scalar field $N_{\theta}$ by the projection layer $Q$. The inset illustrates the operations within a single Fourier layer.}
\label{fig:architecture}
\end{figure}

\subsection{Neural network architecture}
To effectively parameterize the spatially continuous order parameter $\phi(\bm{x})$, we adopt a neural network inspired by the Fourier Neural Operator (FNO) \cite{li2021fourier}. 
Unlike traditional convolutional neural networks \cite{o2015introduction} that rely on local operations on fixed grids, the Fourier architecture operates in the spectral domain, offering a natural representation for periodic structures and enabling resolution-independent optimization.

The proposed architecture comprises three sequential components: an embedding layer, stacked Fourier layers, and a projection layer. 
The Fourier and projection layers follow the standard FNO design. 
The lifting layer in a standard FNO is designed to embed a low-dimensional input function into a high-dimensional latent space, while our embedding layer $P$ acts as a coordinate embedding, mapping the spatial coordinates $\bm{x} \in D$ directly to a high-dimensional hidden feature field $v_0(\bm{x}) = P(\bm{x})$. 
Although both are implemented via a pointwise affine transformation, our embedding operation initializes the neural representation directly from the domain geometry itself, rather than from an external function.

The core building block of the subsequent deep processing is the Fourier layer. 
Let $v_{l-1}: D \to \mathbb{R}^{d_v}$ be the input feature field to the $l$-th layer and $v_{l}$ be the output. The update rule is defined as
\begin{equation*}
\label{eq:fourier_update}
v_{l}(\bm{x}) = \sigma \left( W (v_{l-1})(\bm{x}) + \mathcal{K}(v_{l-1})(\bm{x}) \right), \quad  \bm{x} \in D,
\end{equation*}
where $\sigma$ is a nonlinear activation function, and $W$ is a pointwise affine transformation. The key component is the global spectral convolution operator $\mathcal{K}$, defined by
\begin{equation*}
\label{eq:spectral_conv}
\mathcal{K}(v)(\bm{x}) = \mathcal{F}^{-1} \Bigl( R \cdot \mathcal{F}(v) \Bigr)(\bm{x}).
\end{equation*}
Here, $\mathcal{F}$ and $\mathcal{F}^{-1}$ denote the Fourier transform and its inverse, respectively. 
Following the standard FNO formulation, we implement an efficient spectral truncation by defining the set of active Fourier modes as the box region $Z_{k_{\max}} = \{k \in \mathbb{Z}^d : |k_j| \leqslant k_{\max, j}$, for $j=1,\dots,d\}$. 
The term $R$ is then parameterized as a learnable complex-valued tensor that performs linear transformations exclusively on the modes $k \in Z_{k_{\max}}$.

Finally, a projection layer $Q$, also implemented as a pointwise affine transformation, maps the high-dimensional features $v_L(\bm{x})$ of the last Fourier layer back to an unconstrained scalar field $N_{\theta}(\bm{x})$. 
To provide a unified representation of the network architecture, let us define the operator for the $l$-th Fourier layer as $\mathcal{H}_l(v) := \sigma \left( W v + \mathcal{K} v \right)$. 
The complete network parameterization can thus be expressed as the composition of the coordinate embedding, $L$ stacked Fourier layers, and the final projection:
\begin{equation}
\label{eq:nn_composition}
N_{\theta}(\bm{x}) = \left( Q \circ \mathcal{H}_L \circ \cdots \circ \mathcal{H}_1 \circ P \right)(\bm{x}),
\end{equation}
where $\circ$ denotes function composition. 
It serves as the structural precursor to the physical order parameter $\phi_{\theta}$, as discussed in the subsequent section.

This spectral parameterization is particularly well suited for minimizing the LB free energy due to its intrinsic mathematical structure. 
By operating directly in the frequency domain, the network achieves a global receptive field within a single layer. 
This enables the efficient capture of long-range spatial correlations and interactions that are essential for resolving complex phase topologies.
Furthermore, reliance on the Fourier basis automatically enforces periodic boundary conditions, ensuring that the generated field remains physically consistent with the assumptions of CAM without imposing additional constraints.
Finally, since the weight parameters $R$ are learned in the spectral domain, the resulting representation is continuous and mesh-invariant, which enables the model to be evaluated on computational grids of arbitrary resolution, thereby resolving fine interfacial structures without retraining.

\subsection{Mass-conservation constraint}
Relying on soft constraints, i.e., penalty terms in the loss function, introduces additional hyperparameters, which often leads to complex optimization landscapes and requires delicate tuning of penalty coefficients \cite{sun2020surrogate, wang2021understanding}. 
To address these challenges and ensure physical consistency, we enforce the constraint of the feasible manifold $\mathcal{M}$ in \eqref{eq:feasible_manifold} directly through the network architecture.
Recall that a valid solution $\phi \in \mathcal{M}$ must satisfy both periodic boundary conditions and mass conservation, and the periodic boundary condition is inherently guaranteed by our choice of the Fourier-based architecture. 
As detailed in the previous section, the network operates on the basis of trigonometric functions, which naturally imposes periodicity on the generated field without the need for padding strategies or auxiliary boundary loss terms.

To strictly enforce the mass-conservation constraint \eqref{eq:mass_constraint}, we employ a zero-mean projection strategy at the network output. 
Recalling the unconstrained network output $N_{\theta}(\bm{x})$ defined in \eqref{eq:nn_composition}, the final physical order parameter $\phi_{\theta}(\bm{x})$ is obtained by applying the mass-conservation projection $\mathcal{P}$ \eqref{eq:projection}
\begin{equation}\label{eq:hard_constraint}
\phi_{\theta}(\bm{x}) = \mathcal{P}N_{\theta}(\bm{x}) = N_{\theta}(\bm{x}) - \frac{1}{|\Omega|}\int_{\Omega} N_{\theta}(\bm{y}) \,\mathrm{d}\bm{y}.
\end{equation}
In the discrete setting, the integral is replaced by the spatial average. 
This operation ensures that the total integral of $\phi_{\theta}$ vanishes strictly throughout the training process, effectively constraining the optimization search space within $\mathcal{M}$.

\subsection{Loss function and training strategies}
\label{sec:method_loss}
Our training objective is based on the DRM \cite{yu2018deep}, which reformulates the variational problem as an optimization problem over the function space of neural networks. 
For the unified trainable parameters $\Theta = (\theta, \bm{\beta})$, the primary physical loss function is defined as
\begin{equation}
\label{eq:DRM_loss}
\mathcal{L}_{\text{DRM}}(\Theta) = f(\phi_{\theta}, \bm{\beta}^{-1}),
\end{equation}
where $\bm{\beta}^{-1}$ (element-wise inverse) reconstructs the physical domain lengths required for the energy computation in \eqref{eq:rescaled_energy}. 
A distinct advantage of our network architecture is its natural generation of solutions on uniform periodic grids, which facilitates the use of pseudospectral methods. 
By computing derivatives in the frequency domain, we achieve high accuracy while effectively avoiding the discretization errors of finite difference schemes or the computational cost of automatic differentiation \cite{baydin2018automatic,raissi2019physics} for high-order derivatives.

Beyond the DRM loss \eqref{eq:DRM_loss}, we introduce two specialized training strategies to address the challenges of nonconvex optimization and to target specific equilibrium configurations.

\subsubsection*{Strategy I: Warmup penalty for phase discovery}
When training from scratch with random initializations, the network parameters are typically small, resulting in an output $\phi_{\theta}$ close to zero. 
Because the disordered phase ($\phi=0$) is a local minimum for the reduced temperature $\tau>0$, the optimization can easily get trapped in the disordered phase.

To mitigate this, we propose a \emph{warmup penalty} strategy.
The central idea is to drive the order parameter to acquire energy at the physically critical length scale during the early stages of training. 
We construct a penalty term based on the projection of the order parameter onto the critical sphere in the reciprocal space. 
Consider the test functions in the physical space as $\psi_{\bm{n}}(\bm{x}) = \cos(\bm{n} \cdot \bm{x})$, where $\bm{n} \in \mathbb{S}^{d-1}$ is a unit vector. 
In the computational domain $\tilde{\bm{x}} \in D$, utilizing the inverse geometric parameter $\bm{\beta}$, the effective test function in the reference domain $D$ becomes $\tilde{\psi}_{\bm{n}}(\tilde{\bm{x}}) = \cos(\bm{n} \cdot (\bm{\beta}^{-1} \odot \tilde{\bm{x}}))$, where $\odot$ denotes elementwise multiplication. 
The penalty term is defined as
\begin{equation*}
\mathcal{L}_{\text{p}}(\Theta) = \left( \;\bbint_{\mathbb{S}^{d-1}} \left| \int_{D} \phi_{\theta}(\tilde{\bm{x}}) \, \tilde{\psi}_{\bm{n}}(\tilde{\bm{x}}) \, \mathrm{d}\tilde{\bm{x}} \right|\, \mathrm{d}\bm{n} - C \right)^2,\quad d=2,3,
\end{equation*}
where $\bbint_{\;\,\mathbb{S}^{d-1}}$ denotes the averaged integral over the unit sphere, and $C > 0$ is a hyperparameter. 
This term penalizes the deviation of the order parameter's projection on the critical sphere. 
The time-dependent total loss for this strategy is given by 
\begin{equation*}
\mathcal{L}(\Theta,t) = \mathcal{L}_{\text{DRM}}(\Theta) + w_{\text{p}}(t) \mathcal{L}_{\text{p}}(\Theta),
\end{equation*}
where $w_{\text{p}}(t)$ is a scheduler function. 
We set $w_{\text{p}}(t) > 0$ for the initial $T_{\text{warm}}$ steps to seed the correct mode structure and drive the geometric parameters toward a stress-free configuration. 
Subsequently, we set $w_{\text{p}}(t) = 0$ for the remainder of the training to allow the physics to dominate the relaxation process.

\subsubsection*{Strategy II: Guided initialization for targeted search}
Although the warmup penalty facilitates the spontaneous discovery of various minima, this exploration may not be exhaustive, as certain topologically complex phases can reside in extremely narrow basins of attraction that are inaccessible from random initializations. 
This limitation parallels traditional numerical iterative schemes, which typically require carefully constructed initial guesses to converge to specific target phases.

To deal with this issue, we employ a guided initialization strategy.
Similar to imposing an initial condition in classical solvers, this strategy serves to map an order-parameter initialization directly into the network's parameter space by introducing a data-fidelity term $\mathcal{L}_{\text{d}}$ as 
\begin{equation*}
\mathcal{L}_{\text{d}}(\Theta) = \|\phi_{\theta} - \phi_{\text{init}}\|^2_{L^2},
\end{equation*}
where $\phi_{\text{init}}$ is a pre-calculated specific initial guess. 
Then, the total loss for this strategy is given by 
\begin{equation*}
\mathcal{L}(\Theta,t) = \mathcal{L}_{\text{DRM}}(\Theta) + w_{\text{d}}(t) \mathcal{L}_{\text{d}}(\Theta).
\end{equation*}
Similar to the warmup strategy, the weighting coefficient $w_{\text{d}}(t)$ acts as a time-dependent scheduler. 
We maintain $w_{\text{d}}(t) > 0$ during the initial $T_{\text{guide}}$ steps to drive the optimization trajectory towards the target phase $\phi_{\text{init}}$,.
In the subsequent training stage, we set $w_{\text{d}}(t) = 0$ to release the constraint on data matching, allowing the network to fine-tune the solution and converge solely according to the physical free energy, thereby ensuring the final state is physically accurate.

Additionally, the successful stabilization of these phases after removing the guidance provides strong evidence of the expressivity of our proposed method. 
This observation indicates that the failure to discover these phases spontaneously arises from the nonconvex optimization landscape, rather than from any intrinsic limitation of the network architecture. 
Our numerical experiments further confirm that, once initialized within the appropriate basin of attraction, the neural network has sufficient representational capacity to accurately capture and stably preserve these intricate topological structures.

\subsection{Theoretical analysis of the warmup penalty}
\label{sec:analysis}
To theoretically validate the warmup penalty strategy introduced in Section~\ref{sec:method_loss}, we analyze the local stability of the disordered phase $\phi=0$ under the modified loss landscape. 
We formally establish that although the disordered phase constitutes a local minimum of the free energy, the incorporation of the penalty term destabilizes it by inducing a strict descent direction.
For clarity, let us define the projection intensity functional $\mathcal{S}(\phi)$ and the total penalized energy density $f_{\text{p}}(\phi)$ as
\begin{equation*}
\mathcal{S}(\phi)=\bbint_{\mathbb{S}^{d-1}}\left|\frac{1}{|\Omega|}\int_{\Omega}\phi(\bm{x})\cdot\psi_{\bm{n}}(\bm{x})\,\mathrm{d}\bm{x}\right|\,\mathrm{d}\bm{n}, \quad f_{\text{p}}(\phi)=f(\phi)+w_{\text{p}}(\mathcal{S}(\phi)-C)^2,
\end{equation*}
where $\psi_{\bm{n}}(\bm{x})$ are the test functions on the unit spectral shell defined in the previous subsection.
We present the following theorem to demonstrate this destabilization effect.

\begin{theorem}[Local instability of the disordered phase]
\label{thm:instability}
Consider the Landau--Brazovskii free-energy functional $f(\phi)$ with $\tau > 0$ subject to the mass-conservation constraint, and then the disordered phase $\phi = 0$ is a local minimum of the free energy $f(\phi)$.
For $w_{\text{p}} > 0$ and $C > 0$, the disordered phase is strictly unstable for the penalized energy $f_{\text{p}}(\phi)$. 
Specifically, any mass-conserving direction $\phi_0$ with a non-vanishing projection intensity (i.e., $\mathcal{S}(\phi_0) > 0$) constitutes a descent direction.
\end{theorem}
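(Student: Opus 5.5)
The plan has two parts. First, show that $\phi=0$ is a local minimum of $f$ on $\mathcal{M}$ when $\tau>0$. Second, show that along any mass-conserving ray $\epsilon\phi_0$ with $\mathcal{S}(\phi_0)>0$, the penalized energy $f_{\text{p}}$ decreases at first order in $\epsilon$.

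\emph{Local minimality.} Expand $f$ in Fourier modes on $\Omega$, or equivalently on $D$ with $\Delta_{\bm{\omega}}$. Parseval gives the quadratic part
\[
\frac{1}{2}\sum_{\bm{k}\neq 0}\Big[\big(1-|\bm{k}|_{\bm{\omega}}^2\big)^2+\tau\Big]|\hat\phi(\bm{k})|^2 \;\geqslant\; \frac{\tau}{2}\|\phi\|_{L^2}^2 .
\]
The $\bm{k}=0$ mode is absent because of mass conservation, though it would not matter here. The cubic and quartic terms are controlled by $\|\phi\|_{L^\infty}\lesssim\|\phi\|_{H^2}$, since $H^2\hookrightarrow L^\infty$ for $d\leqslant 3$. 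This gives
\[
\Big|\int \tfrac{\gamma}{6}\phi^3\Big| \leqslant C\|\phi\|_{H^2}\|\phi\|_{L^2}^2 .
\]
Hence for $\|\phi\|_{H^2}$ small, $f(\phi)\geqslant(\frac{\tau}{2}-C\|\phi\|_{H^2})\|\phi\|_{L^2}^2\geqslant 0=f(0)$, with strict inequality for $\phi\neq0$. So $0$ is a strict local minimum in $\mathcal{M}$, and in fact $D^2f(0)$ is coercive in $L^2$.

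\emph{Descent direction.} Fix $\phi_0\in\mathcal{M}$ with $\mathcal{S}(\phi_0)>0$ and set $h(\epsilon)=f_{\text{p}}(\epsilon\phi_0)$ for $\epsilon\geqslant0$. The key observation is that $\mathcal{S}$ is positively homogeneous of degree one: $\mathcal{S}(\epsilon\phi_0)=\epsilon\,\mathcal{S}(\phi_0)$ for $\epsilon\geqslant 0$, because the absolute value pulls $\epsilon$ out of every inner integral. This yields
\[
h(\epsilon)=f(\epsilon\phi_0)+w_{\text{p}}\big(\epsilon\mathcal{S}(\phi_0)-C\big)^2
= w_{\text{p}}C^2-2w_{\text{p}}C\,\mathcal{S}(\phi_0)\,\epsilon+O(\epsilon^2).
\]
Here $f(\epsilon\phi_0)=O(\epsilon^2)$ because $f(0)=0$ and $f$ has no linear term. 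Concretely, $f(\epsilon\phi_0)=\epsilon^2 a_2+\epsilon^3a_3+\epsilon^4a_4$ with finite coefficients, since $\phi_0\in H^2_{\text{per}}$. Therefore the one-sided directional derivative is $h'(0^+)=-2w_{\text{p}}C\,\mathcal{S}(\phi_0)<0$. It follows that $h(\epsilon)<h(0)=f_{\text{p}}(0)$ for all sufficiently small $\epsilon>0$, with an explicit threshold
\[
\epsilon<\frac{2w_{\text{p}}C\mathcal{S}(\phi_0)}{|a_2|+w_{\text{p}}\mathcal{S}(\phi_0)^2+\cdots}
\]
if desired. Consequently $0$ is not a local minimum of $f_{\text{p}}$, and every such $\phi_0$ is a strict descent direction.

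The main subtlety is that $\mathcal{S}$, and hence $f_{\text{p}}$, is not differentiable at $0$ because of the absolute value. The instability must therefore be phrased through one-sided directional derivatives and positive homogeneity, not through a gradient or Hessian. In the same way the $\epsilon\mapsto-\epsilon$ symmetry breaks down, and I would handle that asymmetry explicitly. A secondary point is to check that $\mathcal{S}(\phi_0)$ is well defined and finite. The map $\bm{n}\mapsto|\langle\phi_0,\psi_{\bm{n}}\rangle|$ is continuous on $\mathbb{S}^{d-1}$ and bounded by $\|\phi_0\|_{L^1}/|\Omega|$, since $|\psi_{\bm{n}}|\leqslant1$. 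Such directions also exist: for instance, take $\phi_0=\cos(\bm{n}_0\cdot\bm{x})$ with $\bm{n}_0$ compatible with the domain, which has zero mean. Continuity in $\bm{n}$ makes the projection positive on a neighborhood of $\bm{n}_0$, so $\mathcal{S}(\phi_0)>0$.
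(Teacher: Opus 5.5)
Your descent argument is the same as the paper's. Positive homogeneity of $\mathcal{S}$ turns the penalty into $w_{\text{p}}(\varepsilon\mathcal{S}(\phi_0)-C)^2$, the physical energy enters only at order $\varepsilon^2$, and the one-sided derivative at $0$ is $-2w_{\text{p}}C\,\mathcal{S}(\phi_0)$. Two of your steps are tighter than the paper's. First, you actually prove that $\phi=0$ is a strict local minimum of $f$ on $\mathcal{M}$, which the paper only asserts; your argument uses Parseval with $(1-|\bm{k}|_{\bm{\omega}}^2)^2+\tau\geqslant\tau$, the embedding $H^2\hookrightarrow L^\infty$ for $d\leqslant 3$, and drops the nonnegative quartic term. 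Second, you use the exact expansion $f(\varepsilon\phi_0)=a_2\varepsilon^2+a_3\varepsilon^3+a_4\varepsilon^4$ rather than appealing to Fr\'echet differentiability.

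The gap is in showing that an admissible direction with $\mathcal{S}(\phi_0)>0$ exists. This is what upgrades ``every such $\phi_0$ is a descent direction'' to ``$\phi=0$ is strictly unstable.'' Your witness $\phi_0=\cos(\bm{n}_0\cdot\bm{x})$ needs two things at once. It needs $|\bm{n}_0|=1$, so that your continuity argument starts at a point of $\mathbb{S}^{d-1}$ where the projection $\frac{1}{|\Omega|}\int_\Omega\phi_0\psi_{\bm{n}}\,\mathrm{d}\bm{x}$ equals $\tfrac12$. It also needs $\phi_0\in H^2_{\text{per}}(\Omega)$, i.e.\ $n_{0,i}L_i\in 2\pi\mathbb{Z}$ for every $i$. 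A unit-length reciprocal lattice vector does not exist for a generic box. For the square domain $\Omega_{\text{s}}=[0,20]^2$ considered in the paper you would need integers with $m_1^2+m_2^2=100/\pi^2$. So ``$\bm{n}_0$ compatible with the domain'' is an extra hypothesis on $\Omega$ that the theorem does not make, and in GeoDVF the lengths $\bm{\beta}^{-1}$ are arbitrary during training.

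The paper avoids this by taking $\phi_0=\cos(2\pi x_1/L_1)$, which lies in $\mathcal{M}$ for every box. It computes the projection $\mathcal{I}(\bm{n})$ in closed form and observes that the zero set of this analytic function on the sphere is null.

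You can also keep your continuity argument and change only the witness. Fix any $\bm{n}_*\in\mathbb{S}^{d-1}$. Suppose $\int_\Omega\cos(\bm{k}\cdot\bm{x})\psi_{\bm{n}_*}\,\mathrm{d}\bm{x}$ and $\int_\Omega\sin(\bm{k}\cdot\bm{x})\psi_{\bm{n}_*}\,\mathrm{d}\bm{x}$ both vanished for every nonzero reciprocal vector $\bm{k}$ of $\Omega$. Then completeness of the Fourier basis in $L^2(\Omega)$ would force $\psi_{\bm{n}_*}$ to be a.e.\ constant on $\Omega$, which is false. Hence one of these smooth zero-mean modes has nonzero projection at $\bm{n}_*$. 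By continuity it stays nonzero on a neighborhood of $\bm{n}_*$ on the sphere, which gives $\mathcal{S}(\phi_0)>0$.
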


\begin{proof}
To analyze the stability, we compute the one-sided Gateaux derivative of the functional at $\phi=0$ along a perturbation direction $\phi_0$. 
Define the directional derivative of a functional $J(\phi)$ as
\begin{equation*}
\delta J(0; \phi_0) = \lim_{\varepsilon \to 0^+} \frac{J(\varepsilon \phi_0) - J(0)}{\varepsilon}.
\end{equation*}
Crucially, to respect the mass-conservation constraint, the perturbation direction must lie in the tangent space of the manifold $\mathcal{M}$ at $\phi=0$, which requires $\phi_0$ to be a zero-mean function.
The free-energy density functional $f(\phi)$ is Fr\'echet differentiable, and thus Gateaux differentiable on the Sobolev space $H^2(\Omega)$. 
$\phi = 0$ is a local minimum of the physical energy $f(\phi)$ along the subspace $\mathcal{M}$ for $\tau > 0$, and $\delta f(0; \phi_0)$ vanishes identically for any admissible perturbation $\phi_0$. 
Therefore, the energy scales quadratically near $\phi=0$, i.e., $f(\varepsilon\phi_0) - f(0) = \mathcal{O}(\varepsilon^2)$.

Now consider the penalty term $J_{\text{p}}(\phi) = w_{\text{p}}(\mathcal{S}(\phi)-C)^2$. 
The projection intensity $\mathcal{S}(\phi)$ satisfies positive homogeneity: $\mathcal{S}(\varepsilon\phi_0) = \varepsilon\mathcal{S}(\phi_0)$ for $\varepsilon > 0$. 
Let $A = \mathcal{S}(\phi_0) \geqslant 0$ be the projection intensity of the perturbation, and the value of the penalty term at the perturbed state is
\begin{equation*}
J_{\text{p}}(\varepsilon\phi_0) = w_{\text{p}}(\varepsilon A - C)^2 = w_{\text{p}}(C^2 - 2AC\varepsilon + A^2\varepsilon^2).
\end{equation*}
At the disordered phase $\phi=0$, we have $\mathcal{S}(0)=0$ and $J_{\text{p}}(0) = w_{\text{p}}C^2$, so the directional derivative is calculated as
\begin{equation*}
\delta J_{\text{p}}(0; \phi_0) = \lim_{\varepsilon \to 0^+} \frac{w_{\text{p}}(C^2 - 2AC\varepsilon + A^2\varepsilon^2) - w_{\text{p}}C^2}{\varepsilon} = -2 w_{\text{p}} A C.
\end{equation*}
The directional derivative of the total penalized energy is the sum of the physical and penalty contributions
\begin{equation*}
\delta f_{\text{p}}(0; \phi_0) = \delta f(0; \phi_0) + \delta J_{\text{p}}(0; \phi_0) = 0 - 2 w_{\text{p}} A C = - 2 w_{\text{p}} A C.
\end{equation*}
This result implies that if there exists a direction $\phi_0\in\mathcal{M}$ such that $\mathcal{S}(\phi_0) > 0$, the training loss will strictly decrease along this direction when $w_{\text{p}}, C > 0$.

To complete the proof, we explicitly construct a fundamental mode $\phi_0(\bm{x})=\cos(2\pi x_1 / L_1)$ aligned with the first spatial dimension, which is periodic and satisfies the mass conservation. 
We now verify that the projection intensity $\mathcal{S}(\phi_0)$ is strictly positive. 
For a unit vector $\bm{n}=(n_1,\dots,n_d)$, we evaluate the projection integral
\begin{equation*}
\mathcal{I}(\bm{n}) = \frac{1}{|\Omega|}\int_{\Omega}\phi_0(\bm{x})\cdot\psi_{\bm{n}}(\bm{x})\,\mathrm{d}\bm{x} = \frac{2n_1L_1}{n_1^2L_1^2-4\pi^2} \cdot \cos\left(\frac{1}{2}\bm{n}\cdot\bm{\omega}\right) \cdot \prod_{i=1}^d \frac{\sin(n_iL_i / 2)}{n_iL_i / 2},
\end{equation*}
and we consider a generic $\bm{n}$ such that the denominator is nonzero.
Observe that $\mathcal{I}(\bm{n})$ is a real-analytic function of $\bm{n}$ restricted to the unit sphere $\mathbb{S}^{d-1}$, so its zero set is determined by the roots of the sine and cosine factors (e.g., $n_i L_i / 2 = k\pi$). 
Geometrically, these conditions define a union of lower-dimensional hypersurfaces rather than the entire sphere, so the set of directions where $\mathcal{I}(\bm{n}) = 0$ has measure zero on $\mathbb{S}^{d-1}$.
This implies that $|\mathcal{I}(\bm{n})| > 0$ almost everywhere, so the total projection intensity 
\begin{equation*}
\mathcal{S}(\phi_0) = \bbint_{\mathbb{S}^{d-1}} |\mathcal{I}(\bm{n})| \,\mathrm{d}\bm{n}
\end{equation*}
is strictly positive.
Finally, we obtain $\delta f_{\text{p}}(0; \phi_0) < 0$, and the existence of a negative directional derivative proves that $\phi=0$ is not a local minimum in the penalized loss landscape.
\end{proof}

\begin{remark}
The instability arises from the difference in scaling behaviors near the disordered phase. 
The physical energy $f(\phi)$ scales quadratically ($\sim \varepsilon^2$) due to variational stability, whereas the penalty term introduces a negative linear scaling ($\sim -2w_{\text{p}}AC\varepsilon$) due to the penalty term. 
For sufficiently small $\varepsilon$, the linear term dominates, effectively creating a slope that pushes the optimization trajectory away from the disordered phase.
\end{remark}

\begin{remark}
It is worth noting that the absolute value operation in the definition of $\mathcal{S}(\phi)$ introduces a singularity of the gradient at the disordered phase. 
In classical linear instability, the driving force vanishes as $\phi \to 0$, while the singularity here implies a discontinuous jump in the derivative, resulting in a nonvanishing, constant driving force for infinitesimal perturbations. 
In numerical implementation, since $\phi$ is never strictly zero, this property allows the system to escape the zero attractor significantly faster.
\end{remark}

\section{Numerical experiments}\label{sec:experiments}
In this section, we conduct systematic numerical experiments to validate the effectiveness, accuracy, and exploration capability of our proposed method.

\paragraph{Implementation details} Unless otherwise stated, the neural network employed in our experiments is constructed with four stacked Fourier layers with $64$ hidden channels. 
We adopt GELU \cite{hendrycks2016gaussian} as the nonlinear activation function $\sigma$.
The frequency spectrum is truncated in each layer by retaining the lowest $k_{\max}=16$ modes along each spatial dimension. 
The network is implemented in PyTorch and trained using the AdamW optimizer \cite{loshchilov2019decoupled}. 
The initial learning rate is set to $0.001$, while other optimizer hyperparameters remain at their default values. 
The optimization process is typically conducted for a maximum of $T_{\max} = 2000$ iterations.

We adopt distinct training strategies depending on the experimental goal. 
For phase discovery, we employ strategy I to facilitate escape the disordered phase. Specifically, we set $C=10$ and define the penalty weight $w_{\text{p}}(t)$ as a linear decay schedule
\begin{equation*}
w_{\text{p}}(t)=\begin{cases}
w_{\max}, & t < 250, \\
w_{\max} \cdot \frac{500-t}{250}, & 250 \leqslant t < 500, \\
0, & t \geqslant 500,
\end{cases}
\end{equation*}
where $w_{\max}=0.1$. 
When the objective is to locate a specific configuration, we utilize strategy II, and the guidance weight $w_{\text{d}}(t)$ is maintained at a constant value of $0.1$ for the initial 500 steps to anchor the solution, after which it is set to 0 for the remaining iterations to enable purely physical relaxation.

\paragraph{Phase identification and statistics} To statistically analyze the model's capability to discover various metastable phases, we adopt an energy-based identification protocol.
Prior to training, we generate a library of reference ground truth solutions by considering a predefined set of candidate phases $S$ tailored to the spatial dimension.
For 2D systems, the candidate set is $S=\{\text{LAM}, \text{HEX}\}$, whereas for 3D systems, it encompasses a broader spectrum of complex structures, $S=\{\text{LAM}, \text{HEX}, \text{DG}, \text{BCC}, \text{FCC}, \text{A15}, \sigma\}$. 
For each phase $s \in S$, we construct the initial guess $(\phi_{\text{s}}^0, \omega_{\text{s}}^0)$ based on reciprocal vectors \cite{mcclenagan2019landau} and compute the converged stationary state $(\phi_{\text{s}}, \omega_{\text{s}})$ using the high-precision numerical solvers \eqref{eq:solver} to obtain the reference free-energy density $f_{\text{ref}}^{(s)} = f(\phi_{\text{s}}, \omega_{\text{s}})$.

For a predicted solution $(\phi_{\theta}, \bm{\beta}^{-1})$ obtained from the neural network training, we identify its phase by comparing its energy density with the reference library. 
We define the relative energy difference $\mathcal{E}_{\text{diff}}^{(s)}$ with respect to phase $s$ as
\begin{equation*}
\mathcal{E}_{\text{diff}}^{(s)} = \frac{\left| f(\phi_{\theta}, \bm{\beta}^{-1}) - f_{\text{ref}}^{(s)} \right|}{\left| f_{\text{ref}}^{(s)} \right|}.
\end{equation*}
If there exists a reference phase subject to $\mathcal{E}_{\text{diff}}^{(s)} < 10^{-5}$, we classify the prediction as the corresponding phase $s$. 
To ensure statistical robustness, for general phase discovery tasks, we conduct 100 independent training runs with random initializations.

\paragraph{Evaluation metrics} To quantitatively assess the accuracy and physical validity of the predicted solutions, we further introduce three distinct evaluation metrics. 
First, since any local minimum must satisfy the stationary condition, we evaluate the residual of the E-L equation,
\begin{equation*}
\mathcal{E}_{\text{res}} = \left\|\mathcal{P}\frac{\delta f}{\delta\phi}\right\|^2 = \frac{1}{|\Omega|}\int_{\Omega}\left(\mathcal{P}\frac{\delta f}{\delta\phi}\right)^2\,\mathrm{d}\bm{x},
\end{equation*}
which measures the physical consistency of the network output.
Next, to assess the convergence accuracy relative to the ground truth, we use the traditional numerical solver as a refinement tool. 
We take the network's prediction $(\phi_{\theta}, \bm{\beta}^{-1})$ as an initial guess and run the solver \eqref{eq:solver} to convergence, obtaining a refined reference solution $(\phi^*, \bm{\omega}^*)$. 
The state approximation error $\mathcal{E}_{\phi}$ is defined as the $L^2$ distance between the prediction and this refined state
\begin{equation*}
\mathcal{E}_{\phi} = \|\phi_{\theta}-\phi^*\|^2 = \frac{1}{|\Omega|}\int_{\Omega}(\phi_{\theta}-\phi^*)^2\,\mathrm{d}\bm{x}.
\end{equation*}
Finally, we measure the accuracy of the optimized domain geometry compared to the refined domain parameters $\bm{\omega}^*$ using the relative domain error $\mathcal{E}_{\text{dom}}$,
\begin{equation*}
\mathcal{E}_{\text{dom}} = \frac{\|\bm{\beta}^{-1}-\bm{\omega}^*\|_2^2}{\|\bm{\beta}^{-1}\|_2^2}.
\end{equation*}

\subsection{Escaping the disordered phase}

\begin{figure}[!t]
\centering
\makesubfig{fig:baseline_a}
\makesubfig{fig:baseline_b}
\makesubfig{fig:baseline_c}
\makesubfig{fig:baseline_d}
\includegraphics[width=\linewidth]{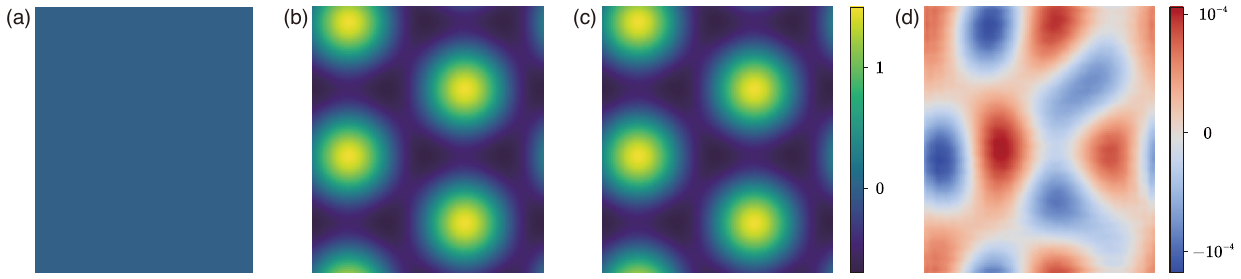}
\caption{Comparison of phase discovery results on the 2D LB model. 
(a) The standard DRM gets trapped in the disordered phase. 
(b) Our method successfully nucleates the hexagonal phase $\phi_{\text{pred}}$. 
(c) The high-precision reference solution $\phi_{\text{ref}}$. 
Panels (a)--(c) share the common colorbar located to the right of panel (c). 
(d) The pointwise difference map $\phi_{\text{pred}} - \phi_{\text{ref}}$, showing the spatial distribution of the signed error with a magnitude of $\mathcal{O}(10^{-4})$, corresponding to the colorbar on the far right.}
\label{fig:baseline}
\end{figure}

We validate our proposed GeoDVF on the 2D LB model with parameters $(\tau, \gamma) = (0.001, 0.6)$, where the disordered phase $\phi=0$ is a local minimum, and the HEX phase corresponds to the global minimum. 
This setup serves as a benchmark to evaluate the method's capability to destabilize and escape the basin of attraction of the disordered phase. 
We compare our proposed method against the standard DRM and traditional numerical solvers. The results are summarized in Figure~\ref{fig:baseline}.

First, we employ the standard DRM by setting the penalty weight $w_{\text{p}}=0$. 
As expected, the optimization from small random initialization rapidly stagnates at the disordered phase $\phi=0$, failing to nucleate an ordered phase (see Figure~\figsubref{fig:baseline}{fig:baseline_a}). 
As discussed in Section~\ref{sec:lbmodel}, traditional iterative solvers are highly sensitive to the initial domain size. 
When the computational domain is incommensurate with the proper periodicity, these methods typically fail to converge to the global HEX minimum but get trapped in metastable states (see Figure~\ref{fig:LB2d} and Table~\ref{tab:sensitivity}).
By comparison, our GeoDVF equipped with the warmup penalty successfully escapes from the disordered phase. 
Driven by the joint optimization mechanism, the system consistently converges to the HEX phase and achieves the theoretical minimum energy, demonstrating robustness against random initializations and domain constraints (see Figure~\figsubref{fig:baseline}{fig:baseline_b}).
We emphasize that the primary objective of GeoDVF is spontaneous phase discovery rather than accelerating a single static simulation. 
Although training a neural network from scratch may require more wall-clock time than running a traditional solver from a given initial guess on a fixed grid, GeoDVF completely eliminates the need to construct initial guesses and domain sizes. 
This paradigm shift significantly enhances the overall exploration efficiency for discovering complex ordered phases without a priori knowledge.

Furthermore, since machine learning approaches and traditional numerical algorithms typically operate at different precision levels, GeoDVF is best utilized as a powerful structural predictor.
Quantitatively, the final state of our model reaches a low E-L residual of $\mathcal{E}_{\text{res}}=7.1\times10^{-3}$, indicating that the network has effectively converged to a stationary point of the LB energy. 
To precisely measure the pattern, the output of GeoDVF can be effortlessly fed as an initial condition into the high-precision numerical solver~\eqref{eq:solver}.
The solution then converges to a high-precision reference solution $\phi^*$ within a few iterations (see Figure~\figsubref{fig:baseline}{fig:baseline_c}). 
To visually inspect the prediction quality, we plot the pointwise difference map in Figure~\figsubref{fig:baseline}{fig:baseline_d}, and the errors are uniformly distributed with a magnitude of $\mathcal{O}(10^{-4})$, indicating excellent local agreement. 
This accuracy is further confirmed by the global relative errors $\mathcal{E}_{\phi}=2.1\times10^{-9}$ and $\mathcal{E}_{\text{dom}}=2.6\times10^{-10}$.

\subsection{Ablation study}
To investigate the efficiency of the scheduler function $w_{\text{p}}(t)$ and its peak magnitude $w_{\max}$, we conduct an ablation study on the 2D LB model with parameters $(\tau, \gamma) = (0.001, 0.6)$. 
We compare two distinct scheduler functions:
\begin{enumerate}
\item \textbf{Hard cutoff:} The penalty weight is held constant at $w_{\text{p}}(t) = w_{\max}$ for the first 500 steps, and then abruptly dropped to zero. 
This represents a sudden removal of the external forcing (see Figure~\figsubref{fig:ablation_scheduler}{fig:wt_a}).
\item \textbf{Linear decay:} The penalty weight remains constant at $w_{\text{p}}(t) = w_{\max}$ for the initial 250 steps, then linearly decays to zero between steps 250 and 500. 
This provides a smoother transition for the optimization landscape (see Figure~\figsubref{fig:ablation_scheduler}{fig:wt_b}).
\end{enumerate}

\begin{figure}[!t]
\centering
\makesubfig{fig:wt_a}
\makesubfig{fig:wt_b}
\includegraphics[width=\linewidth]{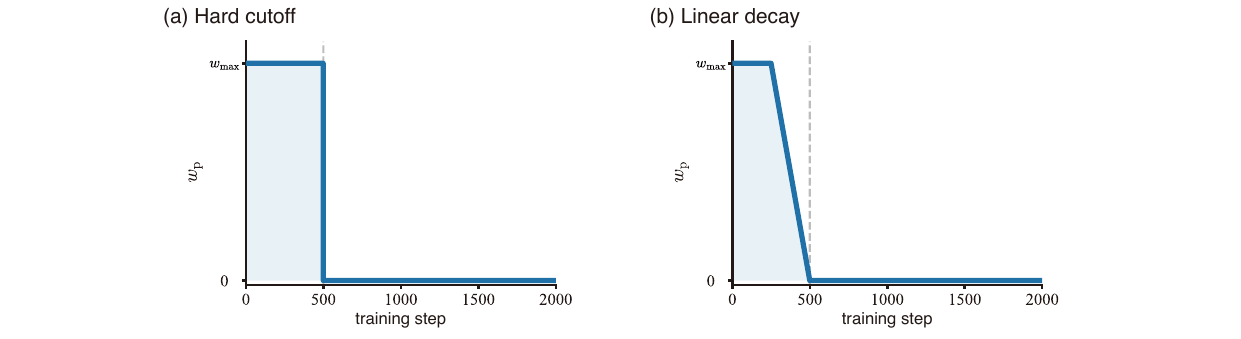}
\caption{Illustration of two scheduler functions. 
(a) Hard cutoff. (b) Linear decay. 
The blue regions indicate when the spectral forcing is active.}
\label{fig:ablation_scheduler}
\end{figure}

For each scheduler type and values of $w_{\max}$, we conduct 100 independent training runs with random initializations. 
Table~\ref{tab:scheduler} reports the success rate, defined as the percentage of runs that correctly converge to the global HEX phase. 
In addition to the standard solution errors ($\mathcal{E}_{\text{res}}, \mathcal{E}_{\phi}, \mathcal{E}_{\text{dom}}$), we specifically report the relative energy difference $\mathcal{E}_{\text{diff}}$ for the successful cases (i.e., the HEX phase) to verify that the converged solutions indeed achieve the theoretical minimum free-energy density.
The baseline case ($w_{\max}=0$) consistently fails to escape the disordered phase (100\% failure rate), while introducing the penalty term significantly improves the success rate for both schedules. 
However, a distinct performance gap appears as $w_{\max}$ increases. 
While the hard-cutoff scheduler performs comparably to the linear-decay scheduler at low magnitudes ($w_{\max} \leqslant 0.01$), it suffers a catastrophic performance drop at higher magnitudes. 
For instance, at $w_{\max}=0.1$, the success rate of hard cutoff drops to 18\%, whereas linear decay achieves its peak performance of 71\%.
We attribute this to the optimization shock caused by the abrupt removal of the penalty in the hard cutoff. 
When a large forcing term is immediately removed, the sudden shift in the loss landscape can eject the trajectory from the attraction basin of the HEX phase back to the disordered phase. 
In contrast, the linear-decay scheduler gradually relaxes the system, allowing it to settle stably into the global minimum even with strong initial forcing.
Based on these findings, we observe that the linear decay offers superior robustness, particularly at larger $w_{\max}$. 
Consequently, we adopt the linear-decay scheduler with $w_{\max}=0.1$ as the default setting for all subsequent experiments, as it yields a high success rate and good accuracy.

\begin{table}[!t]
\centering
\begin{threeparttable}
\caption{Comparison of phase discovery success rates and error metrics under different penalty schedules and peak magnitudes.}
\label{tab:scheduler}
\begin{tabular}{ccccccccc}
\toprule
\multirow{2}{*}{Schedule} & \multirow{2}{*}{$w_{\max}$} & \multicolumn{3}{c}{Convergence rate} & \multicolumn{4}{c}{Error metrics} \\
\cmidrule(lr){3-5} \cmidrule(lr){6-9} 
 &  & HEX & HEX\tnote{*} & disorder & $\mathcal{E}_{\text{diff}}$ & $\mathcal{E}_{\text{res}}$ & $\mathcal{E}_{\phi}$ & $\mathcal{E}_{\text{dom}}$ \\
\midrule

Baseline & $0$ & 0\% & 0\% & 100\% & - & - & - & - \\
\midrule

\multirow{5}{*}{Hard cutoff} 
 & 0.0001 & 0\%  & 3\%  & 97\% & - & - & - & - \\
 & 0.001  & 56\% & 44\% & 0\%  & $3.7\times10^{-6}$ & $5.1\times10^{-4}$ & $4.1\times10^{-8}$ & $2.4\times10^{-11}$ \\ 
 & 0.01   & 70\% & 30\% & 0\%  & $1.8\times10^{-6}$ & $1.3\times10^{-3}$ & $6.6\times10^{-10}$ & $3.7\times10^{-11}$ \\
 & 0.1    & 18\% & 0\%  & 82\% & $9.5\times10^{-6}$ & $7.4\times10^{-2}$ & $6.1\times10^{-8}$ & $2.8\times10^{-10}$ \\
 & 0.5    & 0\%  & 0\%  & 100\%& - & - & - & - \\
\midrule

\multirow{5}{*}{Linear decay} 
 & 0.0001 & 0\%  & 0\%  & 100\% & - & - & - & - \\
 & 0.001  & 56\% & 44\% & 0\%   & $1.2\times10^{-5}$ & $5.7\times10^{-4}$ & $3.5\times10^{-7}$ & $2.4\times10^{-10}$ \\
 & 0.01   & 70\% & 30\% & 0\%   & $1.5\times10^{-6}$ & $1.1\times10^{-3}$ & $1.7\times10^{-10}$ & $2.2\times10^{-11}$ \\
 & 0.1    & 71\% & 28\% & 1\%   & $4.9\times10^{-6}$ & $1.2\times10^{-2}$ & $8.6\times10^{-9}$ & $2.0\times10^{-10}$ \\
 & 0.5    & 62\% & 22\% & 16\%  & $6.6\times10^{-6}$ & $1.5\times10^{-2}$ & $6.8\times10^{-9}$ & $2.0\times10^{-10}$ \\
\bottomrule
\end{tabular}

\begin{tablenotes}
\small
\item[] The symbol `-' indicates that the metrics are not applicable for these cases.
\end{tablenotes}
\end{threeparttable}
\end{table}

Furthermore, we investigate the impact of the warmup strategy when the disordered phase is unstable. 
We conduct numerical experiments in the regime $(\tau, \gamma) = (-0.4, 0.4)$, where the HEX phase is stable and the LAM phase is metastable. 
In the absence of the warmup penalty (i.e., standard DRM, $w_{\max}=0$), the method consistently fails to locate the HEX phase within 100 independent runs using small random initializations. 
Although the metastable LAM phase can be identified, the computational domain often becomes elongated, reflecting the translational invariance of the LAM phase along one direction.
In contrast, employing the warmup penalty with a linear-decay scheduler of $w_{\max}=0.1$, our proposed method successfully identifies both the HEX phase and the LAM phase with a geometrically regular domain size, demonstrating enhanced exploration capability. 
These results indicate that the warmup penalty enhances robustness by facilitating the escape from the disordered phase, enlarging the range of attainable metastable states, and promoting geometric regularization through the enforcement of a characteristic length scale during training.

\subsection{Global minimum selection}
We further investigate the behavior of GeoDVF in a competitive regime where the disordered phase is unstable, and the 2D LB model competes between two ordered phases: the LAM and HEX phases.
Specifically, we vary the parameters $(\tau, \gamma)$ across the phase boundary such that the global minimum transitions from LAM to HEX.
Notably, we find that our method exhibits a strong preference for converging to the global minimum, effectively avoiding a trap in the local minimum.

To explain the physical mechanism behind this selectivity, we employ the LBSD method \cite{zhang2025exploring} to compute the transition pathway and the transition state connecting the LAM and HEX phases.
These calculations are conducted on a fixed computational domain $\Omega = [0, 8\pi] \times [0, 16\pi/\sqrt{3}]$, which is chosen to properly accommodate the periodicities of both phases.
The transition pathways and corresponding energy landscapes are visualized in Figure~\ref{fig:transition_pathway}. A crucial physical insight is observed as the system moves away from the phase boundary—either by decreasing $\gamma$ (favoring LAM, see Figure~\figsubref{fig:transition_pathway}{fig:trans_a}) or increasing $\gamma$ (favoring HEX, see Figure~\figsubref{fig:transition_pathway}{fig:trans_d}). 
In both scenarios, the transition state structurally converges to the local minimum, differing only by a localized region highlighted by the white circles in Figure~\ref{fig:transition_pathway}. 
This difference characterizes the critical nucleus required to initiate the phase transition. 
When the transition state is gradually close to the local minimum, it implies that the critical nucleus for the global minimum is small and localized, so the energetic barriers for nucleation are significantly reduced.
Our method with the warmup penalty, driven by the spectral forcing, induces sufficient fluctuations to form this small critical nucleus, thereby helping the system to cross the saddle point and settle into the global minimum basin.

This mechanism is quantitatively supported by the statistical results in Table~\ref{tab:gm_selection}, as a correlation between the nucleation energy barrier $\Delta f$ and the phase discovery rate is observed. 
As $\gamma$ increases from 0.3 to 0.45, the barrier for forming the HEX phase decreases from $1.6 \times 10^{-2}$ to $4.9 \times 10^{-4}$. 
Correspondingly, the success rate of discovering the HEX phase surges from 1\% to nearly 72\% (including HEX*). 
In contrast, at $\gamma=0.3$, the LAM phase is the global minimum with a low energy barrier, and our method successfully identifies it in 97\% of the runs. 
These findings suggest that GeoDVF is particularly effective in locating the global minimum when the critical nucleation barrier is relatively small, effectively leveraging the easy nucleation pathways inherent in the energy landscape.

\begin{figure}[!t]
\centering
\makesubfig{fig:trans_a}
\makesubfig{fig:trans_b}
\makesubfig{fig:trans_c}
\makesubfig{fig:trans_d}
\includegraphics[width=\linewidth]{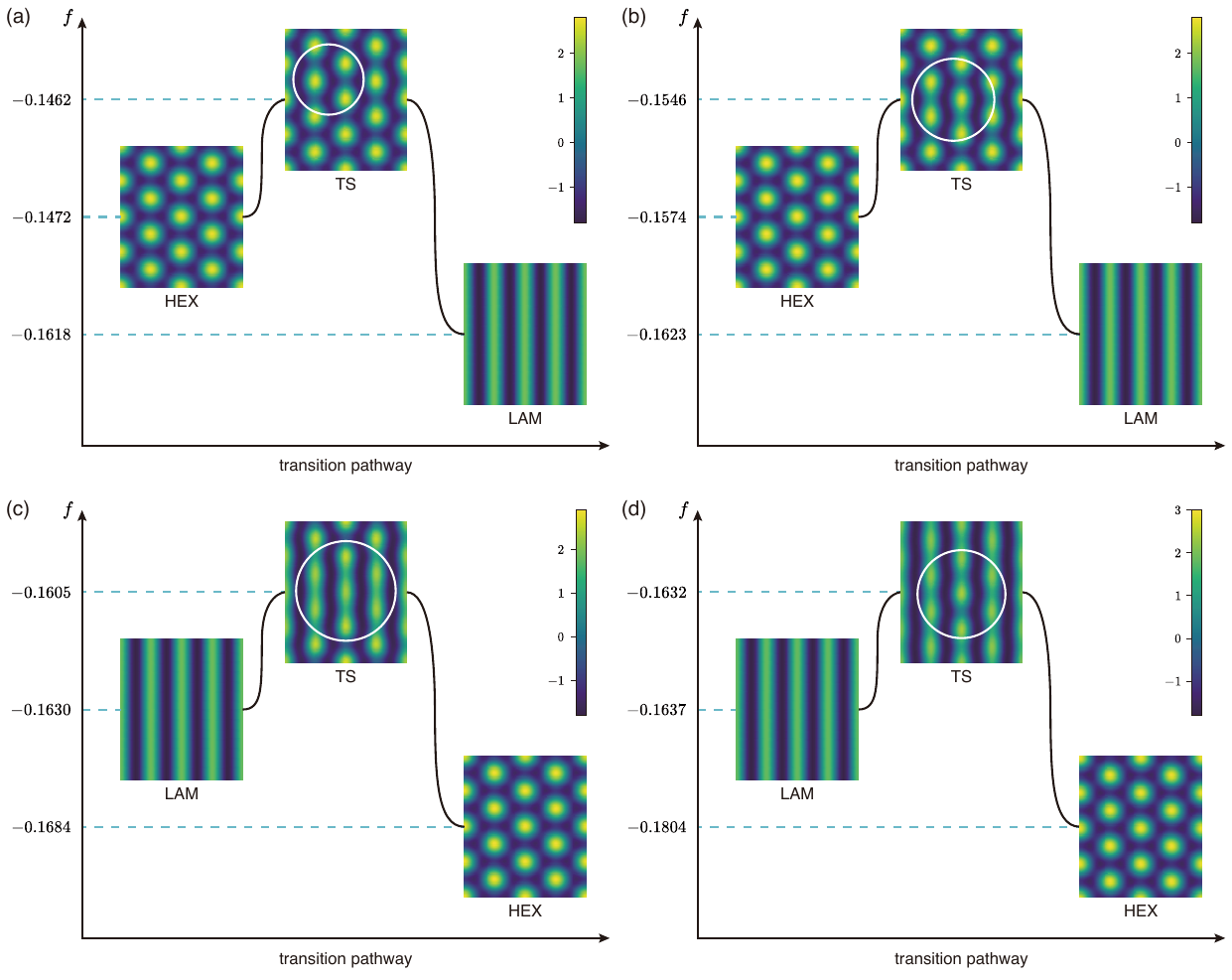}
\caption{Transition pathways between the LAM and HEX phases for (a) $(\tau, \gamma) = (-0.4, 0.3)$, (b) $(-0.4, 0.35)$, (c) $(-0.4, 0.4)$, and (d) $(-0.4, 0.45)$. 
In each plot, the vertical axis denotes the free-energy density $f$, and the pathway connects the local minimum (left), the transition state (TS, middle), and the global minimum (right).
Note that in (a) and (b), the LAM phase is the global minimum, whereas in (c) and (d), the HEX phase becomes the global minimum.
The white circles highlight the critical nucleus region within the transition states.
}
\label{fig:transition_pathway}
\end{figure}

\begin{table}[!t]
\centering
\begin{threeparttable}
\caption{Quantitative analysis of phase selection. A lower energy barrier for nucleation associates with a higher success rate of locating the global minimum.}
\label{tab:gm_selection}

\begin{tabular}{ccccc}
\toprule
\multirow{2}{*}{Property} & \multicolumn{4}{c}{$(\tau, \gamma)$} \\ 
 \cmidrule(lr){2-5}
 & $(-0.4, 0.3)$ & $(-0.4, 0.35)$ & $(-0.4, 0.4)$ & $(-0.4, 0.45)$ \\
\midrule

\multicolumn{5}{l}{\textit{Energy landscape}} \\ 
Global minimum & LAM & LAM & HEX & HEX \\
Energy barrier $\Delta f$ (LAM $\to$ HEX) & $1.6\times10^{-2}$ & $7.7\times10^{-3}$ & $2.4\times10^{-3}$ & $4.9\times10^{-4}$ \\
HEX critical nucleus & large & \multicolumn{2}{c}{$\xrightarrow{\makebox[3cm]{}}$} & small \\
Energy barrier $\Delta f$ (HEX $\to$ LAM) & $1.0 \times 10^{-3}$ & $2.8 \times 10^{-3}$ & $7.9 \times 10^{-3}$ & $1.7 \times 10^{-2}$ \\
LAM critical nucleus & small & \multicolumn{2}{c}{$\xrightarrow{\makebox[3cm]{}}$} & large \\

\midrule

\multicolumn{5}{l}{\textit{Numerical convergence (100 runs)}} \\ 
HEX phase\tnote{a} & $1\%$ & $18\%$ & $38\%\ (+7\%)$ & $58\%\ (+14\%)$ \\
LAM phase & $97\%$ & $77\%$ & $45\%$ & $20\%$ \\
Failure / others\tnote{b} & $2\%$ & $5\%$ & $10\%$ & $8\%$ \\
\bottomrule
\end{tabular}

\begin{tablenotes}
\small
\item[a] Values in brackets indicate the percentage of distorted hexagonal phases (HEX*). 
\item[b] Fail to converge or converge to other high-energy metastable states.
\end{tablenotes}
\end{threeparttable}
\end{table}

\subsection{Discovery of complex 3D ordered structures}
We extend our investigation to the 3D LB model to demonstrate the representation capability of our proposed method on a complicated energy landscape. 
Various 3D ordered structures are visualized in Figure~\ref{fig:LB3d}, where the red regions represent where the order parameter $\phi$ is high in the identified phases.

Using the warmup penalty starting from random initializations, GeoDVF successfully identifies the stable phases LAM, HEX, BCC, and FCC phases. 
Remarkably, beyond these phases, our method is able to spontaneously nucleate the A15 phase without prior knowledge. 
The discovery of this spherical packing structure, which consists of 8 spheres per unit cell, highlights the efficiency of the warmup penalty in guiding the system out of simple metastable traps to organize complex multi-particle packing. 
Quantitatively, as reported in Table~\ref{tab:LB3d}, the relative energy differences $\mathcal{E}_{\text{diff}}$ for these phases are on the order of $10^{-7}$ to $10^{-6}$, confirming that the nucleated structures have converged to the theoretical ground states with high fidelity.

However, certain exotic phases, such as DG and the Frank--Kasper $\sigma$ phase, possess narrow basins of attraction far away from the disordered phase. 
It is important to note that the difficulty in locating these phases is inherent to the energy landscape rather than a limitation of the optimizer; indeed, traditional numerical solvers also fail to nucleate these complex topologies from random noises and invariably require specific initializations. 
To capture these topologically intricate structures, we employ guided initialization by utilizing the approximations provided in \cite{mcclenagan2019landau} as initial guesses. 
These phases present distinct challenges: the DG phase is a bicontinuous network characterized by two interpenetrating but non-intersecting labyrinths, while the $\sigma$ phase is a complex packing featuring a unit cell with up to 30 spheres.

Our results demonstrate that our network successfully locates high-precision stationary points from low-quality initializations. 
Table~\ref{tab:LB3d} shows that even for these complex inputs, the method achieves $\mathcal{E}_{\text{diff}} \sim 10^{-6}$ and $\mathcal{E}_{\phi} \sim 10^{-10}$, indicating that the final solutions are not merely reproductions of the initial guesses but have been optimized to the precise energetic minima. 
The ability of the neural network to accurately resolve the continuous curvature of the DG interfaces and the precise packing arrangement of the $\sigma$ phase confirms that our architecture possesses sufficient representational capacity to approximate high-frequency features and complex topologies in 3D configuration spaces.

\begin{figure}[!t]
\centering
\includegraphics[width=\linewidth]{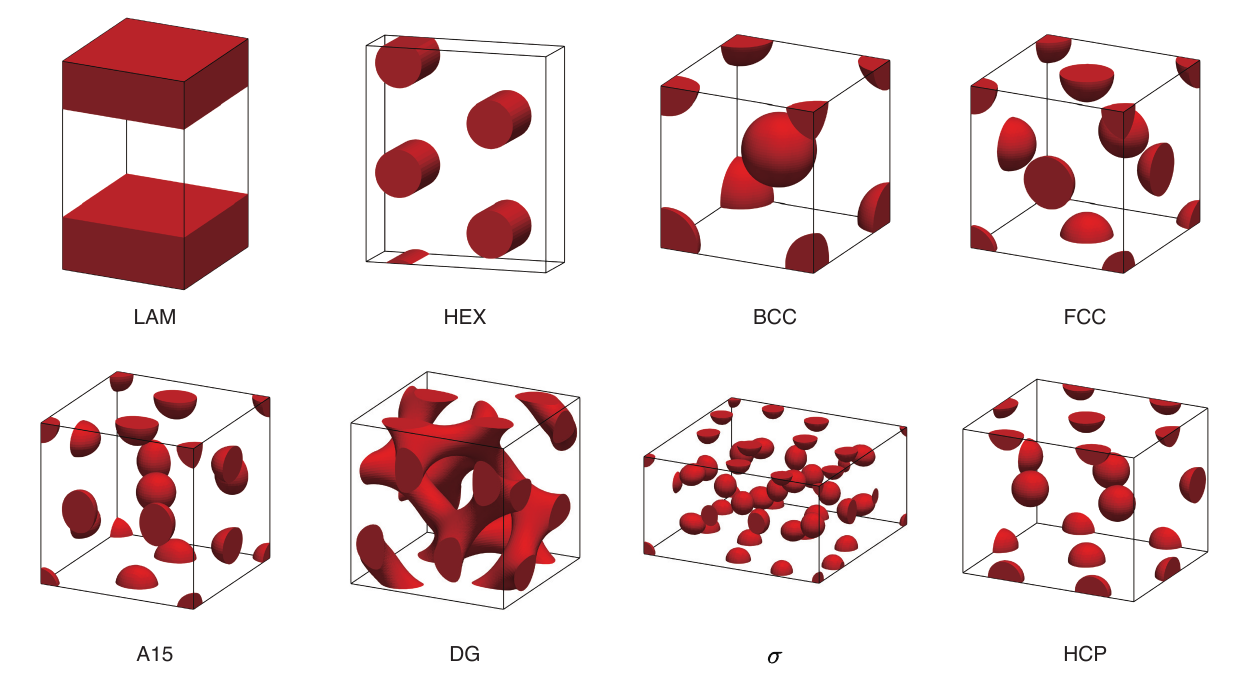}
\caption{3D ordered structures identified by our GeoDVF. 
The gallery displays a diverse set of local minima in the 3D LB model: 
(top row) LAM, HEX, BCC, FCC; 
(bottom row) A15, DG, $\sigma$, and HCP. 
The red volumetric regions correspond to where the order parameter is high.
The LAM, HEX, BCC, FCC, A15 and HCP phases can be successfully nucleated from random initializations using strategy I. 
}
\label{fig:LB3d}
\end{figure}

\begin{table}[!t]
\centering
\begin{threeparttable}
\caption{Quantitative evaluation of the discovered 3D phases. }
\label{tab:LB3d}

\begin{tabular}{cccccc}
\toprule
Phase & $(\tau, \gamma)$ & $\mathcal{E}_{\text{diff}}$ & $\mathcal{E}_{\text{res}}$ & $\mathcal{E}_{\phi}$ & $\mathcal{E}_{\text{dom}}$ \\ 
\midrule

% Strategy I Group
\multicolumn{6}{l}{\textit{Strategy I: warmup penalty}} \\
LAM & $(-0.4, 0.22)$ & $1.8\times 10^{-7}$ & $2.1\times 10^{-1}$ & $6.9\times 10^{-11}$ & $5.7\times 10^{-12}$ \\ 
HEX & $(-0.3, 0.8)$ & $3.8\times 10^{-7}$ & $3.2\times 10^{-1}$ & $2.4\times 10^{-10}$ & $1.0\times 10^{-11}$ \\ 
BCC & $(-0.008, 0.8)$ & $6.9\times 10^{-6}$ & $2.6\times 10^{-2}$ & $6.7\times 10^{-10}$ & $8.7\times 10^{-10}$ \\ 
FCC & $(0.1, 1.6)$ & $1.3\times 10^{-6}$ & $1.5\times 10^{-2}$ & $7.2\times 10^{-10}$ & $4.6\times 10^{-11}$ \\ 
A15 & $(-0.008, 1.2)$  & $3.2\times 10^{-6}$ & $6.7\times 10^{-3}$ & $7.4\times 10^{-10}$ & $2.1\times 10^{-11}$ \\ 
HCP & $(-0.008, 0.8)$ & $5.7\times 10^{-6}$ & $2.6\times 10^{-3}$ & $5.2\times 10^{-8}$ & $1.6\times 10^{-10}$ \\ 

\midrule

% Strategy II Group
\multicolumn{6}{l}{\textit{Strategy II: guided initialization}} \\
DG & $(-0.4, 0.4)$     & $8.6\times 10^{-7}$ & $3.1\times 10^{-3}$ & $9.8\times 10^{-11}$ & $6.0\times 10^{-13}$ \\
$\sigma$ & $(-0.008, 1.0)$ & $1.5\times 10^{-6}$ & $7.9\times 10^{-4}$ & $2.0\times 10^{-10}$ & $1.5\times 10^{-12}$ \\ 
\bottomrule
\end{tabular}
\end{threeparttable}
\end{table}

While the primary focus of the LB model is the identification of global minima, GeoDVF also exhibits a strong capability to explore the rich energy landscape of the LB model and identify distinct metastable states. 
Capturing these local minima is crucial for understanding the energy landscape of the LB model.
By employing strategy I with random initializations, our method successfully stabilizes the hexagonal close-packed (HCP) phase, which shares identical coordination numbers and very similar energies compared to the FCC phase.

To assess the numerical accuracy of this result, we evaluate the solution against high-precision reference states which are obtained by initializing the numerical solver \eqref{eq:solver} with the network's predicted solutions $\phi_{\text{pred}}$ (see Table~\ref{tab:LB3d}). 
The obtained HCP phase achieves a relative energy difference of $\mathcal{E}_{\text{diff}} = 4.9 \times 10^{-6}$, a state approximation error of $\mathcal{E}_{\phi} = 1.1 \times 10^{-7}$, and a relative domain error of $\mathcal{E}_{\text{dom}}=2.1\times10^{-10}$, indicating exceptional agreement with the theoretical ground state. 
Furthermore, the residual error converges to $\mathcal{E}_{\text{res}} = 3.3 \times 10^{-3}$, confirming that the discovered state is a valid, stationary local minimum on the energy landscape.

To quantitatively analyze the competition between these simple crystalline phases (BCC, FCC, and HCP), we conducted experiments with fixed $\tau = -0.008$ and varying $\gamma$, and the refined energy densities $f$ are reported in Table~\ref{tab:metastable}. 
Although in certain regimes these phases are metastable relative to complex structures (e.g., the $\sigma$ phase at $\gamma=1.0$ and the A15 phase at $\gamma=1.2$, see Table~\ref{tab:LB3d}), our method accurately resolves the energetic hierarchy among the BCC, FCC, and HCP phases. 
As shown in Table~\ref{tab:metastable}, we observe a clear transition in relative stability: 
At $\gamma=0.8$, the BCC phase has the lowest energy among the three. 
At $\gamma=1.0$, the HCP phase becomes more stable than BCC and FCC. 
At $\gamma=1.2$, the FCC phase takes precedence. 
This observed sequence of stability is in excellent agreement with the phase diagrams of the 3D phase-field crystal model reported in \cite{jaatinen2010extended,toth2010polymorphism}. 
This consistency confirms that our method not only locates the global minimum but also correctly captures the subtle energetic competition characteristic of the physical model.

\begin{table}[!t]
\centering
\begin{threeparttable}
\caption{Comparison of free-energy densities $f$ for the competing BCC, FCC, and HCP phases.}
\label{tab:metastable}

\begin{tabular}{ccccc}
\toprule
\multirow{2}{*}{$(\tau, \gamma)$} & \multicolumn{3}{c}{Energy density $f$} & \multirow{2}{*}{Global minimum} \\
\cmidrule(lr){2-4}
 & BCC & FCC & HCP &  \\
\midrule
$(-0.008, 0.8)$ & $\mathbf{-0.02081}$ & $-0.01948$ & $-0.02001$ & BCC \\
$(-0.008, 1.0)$ & $-0.05159$ & $-0.05149$ & $\mathbf{-0.05169}$ & $\sigma$ \\
$(-0.008, 1.2)$ & $-0.1104$ & $\mathbf{-0.1126}$ & $-0.1125$ & A15 \\
\bottomrule
\end{tabular}

\begin{tablenotes}
\small
\item Bold values indicate the lowest energy among these three phases.
\end{tablenotes}
\end{threeparttable}
\end{table}

\section{Conclusion and discussion}
\label{sec:final}
In this work, we proposed GeoDVF for discovering ordered structures in the LB model. 
By parameterizing the order parameter through a neural network and treating the domain geometry as a set of learnable variables, our method enables the simultaneous variational optimization of the physical field and the computational domain. 
This joint optimization approach effectively eliminates the artificial stress caused by domain mismatch, removing the heavy dependence on precise geometric initial guesses that limits traditional alternating iterative solvers.

Another key contribution of this study is the warmup penalty strategy. 
Our analysis and experiments show that this strategy effectively destabilizes the disordered phase and promotes geometric regularization to prevent uncontrolled domain stretching during the early stage of training, which allows the automatic formation of complex 3D phases starting purely from random initializations.
Beyond identifying global minima, we successfully found distinct metastable states, and the ability to resolve these structures highlights the method's strength in exploring diverse solutions within a complex energy landscape.

However, limitations remain regarding the exploration of extremely narrow attraction basins.
For topologically intricate phases such as the DG and $\sigma$ phases, spontaneous nucleation from random initializations remains challenging. 
Although we demonstrated that a guided initialization strategy can successfully refine rough approximations of these phases into higher precision stationary states, achieving their spontaneous discovery without prior knowledge requires further research. 
Furthermore, the reliance on discrete Fourier transforms restricts the current framework to orthogonal domains with uniform grids. 
where the system is trapped by the orthogonality of the computational box, preventing full relaxation to non-orthogonal lattice symmetries.
This geometric constraint can also introduce additional local minima, such as the HEX* phase. 

The joint optimization of the infinite-dimensional order parameter and the finite-dimensional geometric parameters occurs in a wide class of pattern-forming systems with intrinsic length scales. 
Beyond the LB model, GeoDVF is applicable to other continuum theories where stress relaxation is critical, such as the self-consistent field theory for block copolymers \cite{matsen1994stable}, the phase-field crystal models \cite{elder2004modeling}, and Landau-de Gennes theory for liquid crystals \cite{de1993physics}. 
Extending our GeoDVF to these complex systems represents a promising direction for future investigation.

\section*{CRediT authorship contribution statement}
\textbf{Yuchen Xie:} Conceptualization, Formal analysis, Investigation, Methodology, Software, Validation, Visualization, Writing - original draft.
\textbf{Jianyuan Yin:} Conceptualization, Formal analysis, Methodology, Writing - review \& editing.
\textbf{Lei Zhang:} Conceptualization, Funding acquisition, Resources, Supervision, Writing - review \& editing.

\section*{Declaration of competing interest}
The authors declare that they have no known competing financial interests or personal relationships that could have appeared to influence the work reported in this paper.

\section*{Data availability}
The codes are available from the authors upon reasonable request.

\bibliographystyle{elsarticle-num} 
\bibliography{references}

\end{document}